\documentclass[a4paper]{article}

\usepackage{tikz}
\usetikzlibrary{arrows,automata,shapes,positioning}

\usepackage{times,authblk}
\usepackage{enumerate}
\usepackage{MnSymbol}
\usepackage{paralist}
\usepackage{amsmath,amsthm,amsfonts,bbm}
\usepackage{comment}

\usepackage{todonotes,fullpage}
\presetkeys{todonotes}{inline}{}

\newtheorem{theorem}{Theorem}
\newtheorem{lemma}[theorem]{Lemma}
 
\theoremstyle{definition}

\newenvironment{Property}[1]{\textbf{#1.} }{}

 \newcommand{\LARE}{LARE}
\newcommand{\pra}{PRA}
\newcommand{\opra}{O\pra{}}
\newcommand{\piszczalka}[3]{#1 \to^{#2} #3}
\newcommand{\q}[1]{{\color{black}{\small{\texttt{#1}}}}}

\newcommand{\labelling}[1]{\lambda_{\textrm{#1}}}

\newcommand{\qam}{QAM}

\newcommand{\aut}{\mathcal{A}}
\newcommand{\padding}{\square}
\newcommand{\Path}{\mathrm{route}}

\newcommand{\fmin}{\textsc{Min}}
\newcommand{\fmax}{\textsc{Max}}
\newcommand{\fcount}{\textsc{Count}}
\newcommand{\fsum}{\textsc{Sum}}

\newcommand{\op}{\sim}
\newcommand{\definedBy}[2]{\mathrel{{#1}{:=}{#2}}}
\newcommand{\lang}{\mathcal{L}}
\newcommand{\Tvalue}[1]{\eta^G(#1)}

\newcommand{\Z}{\mathbb{Z}}
\newcommand{\F}{{\cal F}}
\newcommand{\M}{M}
\newcommand{\z}{@}

\newcommand{\set}[1]{\{#1\}}
\newcommand{\naww}[1]{\llangle #1 \rrangle}
\newcommand{\query}[1]{\lsem #1 \rsem}
\newcommand{\tuple}[1]{{\langle #1 \rangle}}
\newcommand{\Paragraph}[1]{\noindent \textbf{#1}.}

\newcommand{\NP}{\textsc{NP}}
\newcommand{\NL}{\textsc{NL}}

\newcommand{\PSpace}{\textsc{PSpace}}

\title{Querying Best Paths in Graph Databases\thanks{
A conference version fo this paper has been accepted to FSTTCS 2017. 
This work has been supported by Polish National Science Center grant UMO-2014/15/D/ST6/00719.}} 
\author[1]{Jakub Michaliszyn}
\author[1]{Jan Otop}
\author[1]{Piotr Wieczorek}

\affil[1]{Institute of Computer Science, University of Wrocław \\
  \texttt{\{jmi,jotop,piotrek\}@cs.uni.wroc.pl}}

\begin{document}
\maketitle

\begin{abstract}
Querying graph databases has recently received much attention. We propose a new approach to this problem, which balances competing goals of 
expressive power, 
language clarity 
and
computational complexity. 
A distinctive feature of our approach is the ability to express properties of minimal (e.g. shortest) and maximal (e.g. most valuable) paths satisfying given criteria.
To express complex properties in a modular way, we introduce  labelling-generating ontologies.
The resulting formalism is computationally attractive -- queries can be answered in non-deterministic logarithmic space in the size of the database.
\end{abstract}

\section{Introduction}\label{sec:intro}
Graphs are one of the most natural representations of data in a number of important applications such as modelling transport networks, social networks, technological networks (see surveys \cite{Angles+16,Wood12,Baeza13}). The main strength of graph representations is the possibility to naturally represent not only the data itself, but also the links among data. 
Effective search and analysis of graphs is an important factor in reasoning performed in various AI tasks.  
This motivates the study of query formalisms for graph databases, which are capable of expressing properties of paths.

One of the most challenging problems of recent years is to process big data, typically too large to be stored in the modern computers' memory.
This stimulates a strong interest in algorithms working in logarithmic space w.r.t. the size of the database (data complexity)~\cite{Calvanese+06a,Artale+07a,Barcelo+12}.
At the same time, even checking whether there is a path between two given nodes is already \NL{}-complete, so \NL{} is the best complexity for any expressive graph query language.

\Paragraph{Our contribution} 
We propose a new approach to writing queries for graph databases, in which labelling-generating ontologies are first-class citizens.
It can be integrated with many existing query formalisms. However, in order to make the presentation clear we introduce the concept by defining   
a new language \opra{}. \opra{} features \NL{}-data complexity, good expressive power and a modular structure.
The expressive power of \opra{} strictly subsumes the expressive power of popular existing formalisms with same complexity (see Fig. \ref{fig:diagram}). Distinctive properties expressible in \opra{} are based on aggregation of data values along paths and computation of extremal values among aggregated data. One example of such a property is ``$p$ is a path from $s$ to $t$ that has both the minimal weight and the minimal length among all paths from $s$ to $t$''.

To ease the presentation, we define \opra{} in two steps. First, we define the language \pra{}, whose main components are three types of constraints: P\emph{ath}, R\emph{egular} and A\emph{rithmetical}. 
We use path constraints to specify endpoints of graph paths; the other constraints only specify properties of paths. Regular constraints specify paths using regular expressions, adapted to deal with multiple paths and infinite alphabets. 
Arithmetical constraints compare linear combinations of aggregated values, i.e., values of \emph{labels} accumulated along whole paths.

The language \pra{} can only aggregate and compare the values of labelling functions already defined in the graph. The properties we are interested in often require performing some arithmetical operations on the labellings, either simple (taking a linear combination) or complicated (taking minimum, maximum, or even computing some subquery).
Such operations are often nested inside regular expressions (as in \LARE{} \cite{Grabon+16}) making queries unnecessarily complicated. Instead, similarly as in \cite{Arenas+2014} we specify such operations in a modular way as \emph{ontologies}. This leads to the language \opra{}, which comprises Ontologies and \pra{}.
In our approach all knowledge on graph nodes is encoded by labellings, and our ontologies also are defined as the auxiliary labellings. 
For example, having a labelling $\textrm{child}(x, y)$ stating that $x$ is a child of $y$, we can define a labelling $\textrm{descendant}(x, y)$ stating that $x$ is a descendant of $y$. 
Such labellings can be computed on-the-fly during the query evaluation. 

\Paragraph{Related work} 
\emph{Regular Path Queries (RPQs)} \cite{sigmod/CruzMW87,kr/CalvaneseGLV00} are usually used as a basic construction for graph querying.
RPQs are of the form $\piszczalka{x}{\pi}{y} \wedge \pi \in L(e)$ where $e$ is a (standard) regular expression. 
Such queries return pairs of nodes $(v, v')$ connected by a path $\pi$ such that the labelling of $\pi$ forms a word from $L(e)$. 
\emph{Conjunctive Regular Path Queries (CRPQs)} are the closure of RPQs under conjunction and existential quantification \cite{pods/ConsensM90,siamcomp/MendelzonW95}.
Barcelo et al., \cite{Barcelo+12} introduced \emph{extended CRPQs (ECRPQs)} that can compare tuples of paths by \emph{regular relations} \cite{Elgot1965,tcs/FrougnyS93}. Examples of such relations are path equality, length comparisons, prefix (i.e., a path is a prefix of another path) and fixed edit distance. 
Regular relations on tuples of paths can be defined by the standard regular expressions over alphabet of tuples of edge symbols. 

Graph nodes often store \emph{data values} from an infinite alphabet. In such graphs, paths are interleaved sequences of data values and edge labels. 
This is closely related to \emph{data words} studied in XML context \cite{NevenSV04,DemriLN07,Segoufin06,BojanczykDMSS11}.
  Data complexity of query answering for most of the formalisms for data words is NP-hard~\cite{Libkin+16}.
This is not, however, the case for \emph{register automata} \cite{tcs/KaminskiF94}, which inspired Libkin and Vrgo\v{c} to define \emph{Regular Queries with Memory (RQMs)} ~\cite{Libkin+16}. RQMs are again of the form $\piszczalka{x}{\pi}{y} \wedge \pi \in L(e)$. However, $e$ is now \emph{Regular Expression with Memory (REM)}. REMs can store in a register the data value at the current position and test its equality with other values already stored in registers. 
Register Logic~\cite{BarceloFL15} is, essentially, the language of REMs closed under Boolean combinations, node, path and register-assignment quantification. It allows for comparing data values in different paths. The positive fragment of Register Logic, RL$^+$, has data complexity in \NL{}, even when REMs can be nested using a branching operator.
Walk Logic~\cite{HellingsKBZ13} extends FO with path quantification and equality tests of data values on paths. Query answering for WL is decidable but its data complexity is not elementary~\cite{BarceloFL15}. \LARE{} \cite{Grabon+16} is a query language that can existentially quantify nodes and paths, and check relationship between many paths. Path relationships are defined by 
 regular expressions with registers that allow for various arithmetic operations on registers.    

\begin{figure}
\centering
\begin{tikzpicture}[->,>=stealth',shorten >=1pt,auto,node distance=2cm,semithick]
  \tikzstyle{every state}=[fill=white,draw=black,text=black,minimum size=0.9cm]

  \node[state, ellipse] (A) at (0,0)                   {ECRPQ};
  \node[state, ellipse] (Z) at (-3.0,0)        {RPQ};
  \node[state, ellipse] (B) at (3,2) {ECRPQ+LC};
  \node[state, ellipse] (C) at (3, 0) {LARE};
  \node[state, ellipse] (D) at (6, 0) {\opra{}};

  \path (A) edge              node {$\subsetneq$} (B)
            edge              node {$\subsetneq$} (C)
        (B) edge              node {$\subsetneq$} (D)
	    (C) edge 		      node {$\subsetneq$} (D)
        (Z) edge              node {$\subsetneq$} (A)
        ;
\end{tikzpicture}
\caption{Comparison between different query languages}\label{fig:diagram}
\end{figure}
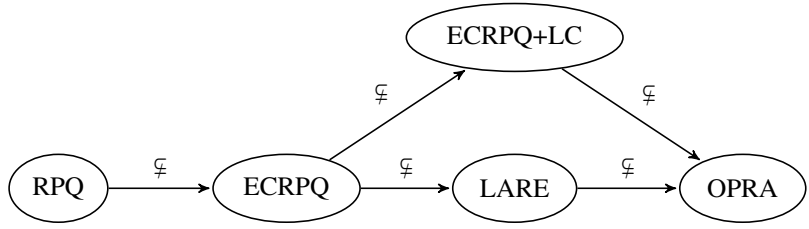

\Paragraph{Aggregation}
Ability to use aggregate functions such as sum, average or count is a~fundamental mechanism in database systems.
Klug \cite{Klug82} extended the relational algebra and calculus with aggregate functions and proved their equivalence. 
Early graph query languages $G^+$~\cite{CruzMW88} or GraphLog~\cite{pods/ConsensM90,ConsensM93} can aggregate data values.  
Consens and Mendelzon~\cite{ConsensM93} studied  \emph{path summarization}, i.e., summarizing information along paths in graphs.
They assumed natural numbers in their data model and allowed to aggregate summarization results.
In order to achieve good complexity (in the class NC) they allowed aggregate and summing operators that form a closed semiring. 
Other examples of aggregation can be found in \cite{Wood12}.

Summing vectors of numbers along graph paths have been already studied in the context of various formalisms based on automata or regular expressions
and lead to a number of proposals that have combined complexity in PSPACE and data complexity in \NL{}. 
Kopczy\'nski and To \cite{KopczynskiT10} have shown that \emph{Parikh images} (i.e., vectors of letter counts) for the usual finite automata can be expressed
using a union of linear sets that is polynomial in the size of the automaton and exponential in the alphabet size (the alphabet size, in our context, corresponds to the dimension of vectors).  
Barcelo et al. \cite{Barcelo+12} extended ECRPQs with linear constraints on the numbers of edge labels counts along paths.
They expressed the constraints using reversal-bounded counter machines, translated further to Presburger arithmetic formulas of a polynomial size and evaluate them using techniques from \cite{KopczynskiT10,Scarpellini84}. 

Figueira and Libkin \cite{FigueiraLibkin15} studied \emph{Parikh automata} introduced in 
\cite{KlaedtkeR03}. These are finite automata that additionally store a vector of \emph{counters} in $\mathbb{N}^k$
. Each transition specifies also a vector of natural numbers.     
While moving along graph paths according to a transition the automaton adds this transition's vector to the vector of counters.
The automaton accepts if the computed vector of counters is in a given semilinear set in $\mathbb{N}^k$. 
Also a variant of regular expressions capturing the power of these automata is shown.  
This model has been used to define a family of variants of CRPQs that can compare tuples of paths using \emph{synchronization languages} \cite{FigueiraL15}. 
This is a relaxation of regularity condition for relations on paths of ECRPQs and 
leads to more expressive formalisms with data complexity still in \NL{}.
These formalisms are incomparable to ours since they can express non-regular relations on paths like suffix but cannot express properties of data values, nodes' degrees or extrema.

Cypher \cite{TheNeo4jTeam17} is a practical query language implemented in the graph database Neo4j.
It uses \emph{property graphs} as its data model. These are graphs with labelled nodes and edges, but edges and nodes can also store attribute values for a set of \emph{properties}.
\texttt{MATCH} clause of Cypher queries allows for specifying graph patterns that depend on nodes' and edges' labels as well as on their properties values.
Cypher does not allow full regular expressions however graph patterns can contain transitive closure over a single label.
More on Cypher can be found in a survey \cite{Angles+16}.

RDF \cite{Cyganiak+2014} is a W3C standard that allows encoding of the content on the Web in a form of a set of \emph{triples} representing an edge-labelled graph.
Each triple consists of the subject $s$, the predicate $p$, and the object $o$ that 
are resource identifiers (URI’s), and represents an edge from $s$ to $o$ labelled by $p$. 
Interestingly, the middle element, $p$, may play the role of the first or the third element of another triple.  
Our formalism \opra{} allows to operate directly on RDF without any complex graph encoding, by using a ternary labelling representing RDF triples. 
This allows for convenient navigation by regular expressions in which also the middle element of a triple can serve as the source or the target of a single navigation step (cf. \cite{Libkin+13}). 
The standard query formalism for RDF is SPARQL \cite{PrudhommeauxS08,HarrisS+13}. It implements \emph{property paths} which are RPQs extended with inverses and limited form of negation (see survey \cite{Angles+16}).

 \section{Language \opra{}}
Various kinds of data graphs are possible and presented in the literature. The differences typically lie in the way the elements of graphs are labelled -- both  nodes and edges may be labelled by finite or infinite alphabets, which may have some inner structure. Here, we choose a general approach in which a \emph{labelled graph}, or simply a graph, is a tuple consisting of a finite number of \emph{nodes} $V$ and a number of labelling functions $\lambda: V^l \to \Z\cup\{-\infty, \infty\}$ assigning integers to vectors of nodes of some fixed size.
While edges are not explicitly mentioned, if needed, one can consider an \emph{edge} labelling $\lambda_E$ such that $\lambda_E(v, v')$ is $1$ if there is an edge from $v$ to $v'$
and it is $0$ otherwise. More sophisticated edges, e.g., with integer labels, may be handled by means of standard embedding, defined in Section \ref{sec:ep}.
For convenience, we assume that the set of nodes always contains a distinguished node $\padding$ -- we  use it as a ``sink node'', to avoid problems with paths of different lengths.

A \emph{path} is a sequence of nodes. For a path $p=v_1 \dots v_k$, by $p[i]$ we denote its $i$-th element, $v_i$, if $i\leq k$, and $\padding$ otherwise.

\subsection{Basic constructs}
We first define the language \pra{}, which is the core of the language \opra{}.
The queries of \pra{} are of the form \smallskip\\
\q{MATCH NODES $\vec{x}$, PATHS $\vec{\pi}$\\
SUCH THAT \text{P}ath\_constraints\\
WHERE \text{R}egular\_constraints\\
HAVING \text{A}rithmetical\_constraints} \smallskip\\
 where 	$\vec{x}$ are free node variables, $\vec{\pi}$ are free path variables,
\q{Path\_constraints} is a~conjunction of path constraints, \q{Regular\_constraints} is a conjunction of \emph{regular constraints} and, finally, \q{Arithmetical\_constraints} is a conjunction of \emph{arithmetical constraints}, as defined below. The constraints may contain variables not listed in the \q{MATCH} clause (which are then existentially quantified).

\Paragraph{Path constraints} Path constraints are expressions of the form $\piszczalka{x_s}{\pi}{x_t}$, where $x_s, x_t$ are node variables and $\pi$ is a path variable, satisfied if $\pi$ is a sequence of nodes starting from $x_s$ and ending in $x_t$.

\Paragraph{Regular constraints}
The main building blocks of regular constraints are \emph{node constraints}. 
Syntactically, a $k$-node constraint is an expression 
containing free node variables $\z_1, \z_1', \dots, \z_k, \z_k'$ and of the form $X \op X'$, where 
$\op \in \{\leq, <, =\}$ and each of $X, X'$ is an integer constant or a labelling function $\lambda_i$ applied to some of the free variables. 

A $k$-node constraint for a regular constraint over $k$ paths 
may be seen as a function that takes a vector containing two nodes of each path: a \emph{current node} (represented by $\z_i$) and a \emph{next node} (represented by $\z_i'$), and returns a Boolean value. 
The semantics is given by applying the appropriate labelling functions to the nodes given as an input and comparing the value  according to the $\op$ symbol. 

A regular constraint $R(\pi_1, \ldots, \pi_k)$ is syntactically a regular expression over an infinite alphabet consisting of all the possible $k$-node constraints. 
Assume $p_1, \dots, p_k$ are paths and let $w_1 \dots w_s$ be the word such that $s = \max(|p_1|, \dots, |p_k|)$ and each $w_i \in V^{2k} $ is defined as $w_i = (p_1[i], p_1[i+1], \ldots, p_k[i], p_k[i+1])$, i.e., it is a vector consisting of $i$-th and $i+1$-th node of each path (or can be substituted by $\padding$ if not present). We say that $p_1, \dots, p_k$
 \emph{satisfy} $R$, denoted as $R(p_1, \dots, p_k$), if the $w_1 \dots w_s$ belongs to the language $L^G(R)$, defined inductively in the usual manner:
\begin{compactitem}
\item $L^G(R)$, where $R$ is a node constraint, is defined as a set of vectors of length $2k$ for which the constraint $R$ returns true.
\item $L^G(R \cdot R')=\set{a \cdot b \mid a \in L^G(R) \land b \in L^G(R')}$.
\item $L^G(R+R')$ is the union of $L^G(R)$ and $L^G(R')$.
\item $L^G(R^*)$ is the Kleene-star closure of $L^G(R)$.
\end{compactitem}

\sloppy
Note that acccording to the definitions above, the variables $\z_1, \z_1', \ldots, \z_k, \z_k'$ in a regular constraint $R(\pi_1, \ldots, \pi_k)$ always refer to the nodes of the paths $\pi_1, \ldots, \pi_k$, e.g., $\z_4$ refers to the current node of the path $\pi_4$ and $\z_2'$ refers to the next node of the path $\pi_2$. In order not to mix the variables with ordinary ones we disallow to use them in any other context.  

\fussy
\Paragraph{Arithmetical constraints} An arithmetical constraint is an inequality 
\( c_{1} \Lambda_1 + \ldots + c_j \Lambda_j\leq c_0 \),
where $c_0, \ldots, c_j$ are integer constants and each $\Lambda_\ell$ is an expression of the form $\lambda_i[\pi_{i_1}, \ldots, \pi_{i_k}]$. 
The value of $\lambda_i[\pi_{i_1}, \ldots, \pi_{i_k}]$ 
over paths $p_1, \dots, p_n$ is defined as the sum $\sum_{i=1}^{s} \lambda_i(p_{i_1}[i], \ldots, p_{i_k}[i])$, where $s=\max\{|p_1|, \dots, |p_k|\}$, i.e., the sum of the labelling for vectors of nodes on corresponding positions of all paths. 
Paths $\vec{p}$ satisfy the arithmetical expression $c_{1} \Lambda_1 + \ldots + c_j \Lambda_j \leq c_0$ if
the value of the left hand side, with $\vec{\pi}$ instantiated to $\vec{p}$, is less than or equal to $c_0$.

\Paragraph{Query semantics} 
Let $Q(\vec{x}, \vec{p})$ be a \pra{} query, and $\vec{x}'$ and $\vec{\pi}'$ be node and path variables in $Q$ that are not listed as free.
We say that nodes $\vec{v}$ and paths $\vec{p}$ (of some graph $G$) satisfy $Q$, denoted as $Q(\vec{v}, \vec{p})$,  if and only if there exist nodes $\vec{v}'$ 
and paths $\vec{p}'$ such that the instantiation
$\vec{x} = \vec{v}$, $\vec{x}'  = \vec{v}'$, 
$\vec{\pi} = \vec{p}$ and $\vec{\pi}'  = \vec{p}'$ satisfies all constraints in $Q$.

\subsection{Auxiliary labelling}
We introduce a way of defining auxiliary labellings of graphs, which are defined based on existing graph labellings and its structure. The ability to define auxiliary labellings significantly extends the expressive power of the language.
The essential property of auxiliary labellings is that their values do not need to be stored in the database, which would require polynomial memory, but can be computed \emph{on demand}. An auxiliary labelling may be seen as an \emph{ontology} or a \emph{view}.

We assume a set $\F$ of \emph{fundamental functions} $f:(\Z\cup\{-\infty, \infty\})^* \to \Z\cup\{-\infty, \infty\}$ consisting of \emph{aggregate functions} maximum $\fmax$, minimum $\fmin$,
counting $\fcount$, summation $\fsum$, and \emph{binary} functions $+$, $-$, $\cdot$ and $\leq$ (assuming $0$ for false and $1$ for true, and that these functions return $0$ if the number of inputs is not two).
$\F$ can be extended, if needed, by any functions computable by a non-deterministic Turing machine whose size of all tapes while computing $f(\vec{x})$  is 
logarithmic in length of $\vec{x}$ and values in $\vec{x}$, assuming binary representation, provided that additional aggregate functions in $\F$ are invariant under permutation of arguments.

\Paragraph{Terms} 
In order to specify values for auxiliary labellings we use \emph{terms}.
A term $t(\vec{x})$ is defined by the following BNF:
\begin{align*}
t(\vec{x}) ::=  c & 
\mid \lambda(\vec{y}) 
\mid [Q(\vec{y})] 
\mid \min_{\lambda, \pi} Q(\vec{y}, \pi) 
\mid \max_{\lambda, \pi} Q(\vec{y}, \pi) 
\\ &
\mid y = y
\mid  f(t(\vec{y}), \dots, t(\vec{y}) ) 
\mid f'(\{ t(x) \colon t(x,\vec{y})\})
\end{align*}
where $\vec{x}$ is a vector of node variables, $x$ is a fresh node variable, $c$ is a constant,
$\lambda$ is a labelling, $Q$ is a \pra{} query in $G$, $f \in \F$, $f'\in \F$ is aggregate, $\vec{y}$ ranges over vectors of variables among $\vec{x}$ and $y$ ranges over variables among $\vec{x}$.

Let $G$ be a graph. A \emph{variable instantiation}  $\eta^G: \vec{x} \to V$ in $G$  is a function that maps variables in $\vec{x}$ to nodes of $G$. 
Such a function extends canonically to subvectors of $\vec{x}$. 
Below we inductively extend variable instantiations to terms.
If $G$ is clear from the context, we write $t(\vec{v})$ as a shorthand of $\eta^G(t(\vec{x}))$, where $\eta^G(\vec{x}[i])=\vec{v}[i]$ for all $i$.
\begin{enumerate}[ 1.]
\item $\Tvalue{c} = c$, where $c$ is a constant,
\item $\Tvalue{\lambda(\vec{y})} = \lambda(\eta^G(\vec{y}))$, where $\lambda$ is a labelling of $G$
\item $\Tvalue{[Q(\vec{y})]}$ is $1$ if $Q(\eta^G(\vec{y})
)$ holds in $G$ and $0$ otherwise,
\item 
$\Tvalue{\min_{\lambda, \pi} Q(\vec{y}, \pi)}$ is the 
minimum of values of $\lambda[p]$, defined as in the arithmetical constraints, over all paths $p$ such that 
$Q(\eta^G(\vec{y}),p)$ holds in $G$, 
\item 
$\Tvalue{\max_{\lambda, \pi} Q(\vec{y}, \pi)} = \max(\set{\lambda[p] \mid Q(\eta^G(\vec{y}),p)})$, 
\item $\Tvalue{y=y'}$ is $1$ if $\Tvalue{y}=\Tvalue{y'}$ and $0$ otherwise,
\item  $\Tvalue{f(t_1(\vec{y}_1), \ldots, t_k(\vec{y}_k)} {=} f(\Tvalue{t_1(\vec{y}_1},\ldots, \Tvalue{t_k(\vec{y}_k}))$,  
\item  
\(\Tvalue{f(\{ t(x) {:} t'(x,\vec{y}) \})}=f(t(v_1), \ldots, t(v_n))\),
where $v_1, \ldots, v_n$ are all nodes $v$ of $G$ s.t. $t'(v, \eta^G(\vec{y}))=1$. 
\end{enumerate}

\Paragraph{Auxiliary labellings}
Consider a term $t(\vec{x})$ and a graph $G$, which does not have a labelling $\lambda$. 
We define the graph $G[\definedBy{\lambda}{t}]$ as the graph $G$ extended with the labelling $\lambda$ 
such that 
$\lambda(\vec{v}) = t(\vec{v})$ for all $\vec{v}\in V^k$.
We call $\lambda$ an \emph{auxiliary labelling} of $G$.
We write  $G[\definedBy{\lambda_1}{t_1}, \ldots, \definedBy{\lambda_n}{t_n}]$ to denote 
the results of successively adding labellings $\lambda_1, \dots, \lambda_{n}$ to the graph $G$, i.e., $G[\definedBy{\lambda_1}{t_1}][\definedBy{\lambda_2}{t_2}]\ldots [\definedBy{\lambda_n}{t_n}]$.

\Paragraph{Language \opra{}} An \opra{} query is an expression of the form
\q{LET $O$ IN $Q'$}, where $Q'$ is a \pra{} query, $O$ is of the form $\definedBy{\lambda_1}{t_1}, \ldots, \definedBy{\lambda_n}{t_n}$ and $t_1, \ldots, t_n$ are terms. 
The query $Q$ holds over graph $G$, nodes $\vec{v}$ and paths $\vec{p}$, denoted as $Q(\vec{v}, \vec{p})$, if and only if $Q'(\vec{v}, \vec{p})$ holds over 
$G[O]$.
Note that $Q'(\vec{x})$ can refer to auxiliary labellings $\lambda_{1}, \ldots, \lambda_{n}$.
The size of $Q$ is the sum of binary representations of terms $t_1, \ldots, t_n$
and the size of query $Q'$. 
 
 \section{Examples}
\label{sec:examples}
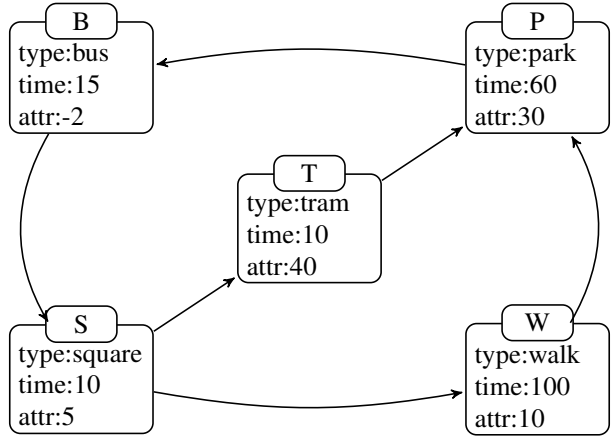
\begin{figure}
\centering
\begin{tikzpicture}[->,>=stealth',shorten >=1pt,auto,node distance=2.9cm,semithick]
  \tikzstyle{every state}=[fill=white,draw=black,text=black,minimum size=0.8cm]
  \tikzset{
   datanode/.style = {
    draw, 
    rectangle,
    rounded corners,
    align=left,
    text width=1.65cm,
    inner ysep = +0.8em},
    labelnode/.style = {
    draw, 
    rectangle,
    rounded corners,
    align=center,
    fill=white}
  }

\foreach \lab\lX/\lY/\name/\kind/\cost/\time/\attr in 
         {A/0/0/S/square/0/10/5,
          AB1/3/2/T/tram/5/10/40,
          B/6/4/P/park/100/60/30,
          AB2/6/0/W/walk/0/100/10,
          BA/0/4/B/bus/10/15/-2} 
{
  
  \node[datanode] (\lab) at (\lX,\lY)  
   {
      type:\kind\newline time:\time\newline attr:\attr \vspace{-5pt}
   };
  \node[labelnode,fill=white,text width=0.7cm] (small\lab) at (\lab.north)  
  {
   \name
  };
  
}

\path (A) edge (AB1)
      (AB1) edge (B) 
      (A) edge[out=350,in=190] (AB2)
      (AB2) edge[bend right] (B)
      (B) edge[out=170,in=10] (BA)
      (BA) edge[bend right] (A)
            
;

\end{tikzpicture}
\caption{An example of a map-representing graph}\label{fig:map}
\end{figure}

We focus on the following scenario: a graph database that corresponds to a map of some area. 
Each graph's node represents either a \emph{place} or a \emph{link} from one place to another. 
The graph has four unary labellings and one binary labelling. 
The labelling  $\labelling{type}$ represents the type of a place for places (e.g., square, park, pharmacy) or the mode of transport for links (e.g., walk, tram, train); we assume each type is represented by a constant, e.g., $c_{\textrm{square}}$, $c_{\textrm{park}}$.
 The labelling $\labelling{attr}$ represents attractiveness  (which may be negative, e.g., in unsafe areas), 
and $\labelling{time}$ represents time. 
The binary labelling $\labelling{E}$ represents edges: for nodes $v_1, v_2$, the value $\labelling{E}(v_1,v_2)$ is $1$ if there is an edge from $v_1$ to $v_2$
and $0$ otherwise. 
For example, the graph on Fig. \ref{fig:map} represents a map with two places: $S$ is a square and $P$ is a park. There are three nodes representing links: node $W$ represents moving from $S$ to $P$ by walking, $T$ moving from $S$ to $P$ by tram and $B$ moving from $P$ to $S$ by bus.

\subsection{Language \pra} \label{subsec:pra}
We begin with the query $Q_\mathrm{route}(s, t, \pi)$ stating that there is a path $\pi$ from a node $s$ to a node $t$ such that 
each pair of consecutive nodes on this path is connected by an edge given by the edge labelling $\labelling{E}$.
Recall that our path constraints of the form $\piszczalka{s}{\pi}{t}$ require only that $\pi$ is a sequence of nodes that starts at $s$  and ends at $t$. It does not depend on any labelling, in particular $\labelling{E}$. 
We introduce a regular constraint \q{$\Path(\pi)$}
defined as 
\q{$\langle \labelling{E}(\z_1, \z'_1) = 1\rangle^\ast\langle\top\rangle(\pi)$} that states that any two consecutive nodes on $\pi$ satisfy $\labelling{E}$. As the last node has no successor (i.e., $\z_1'=\padding$ for the last node), the constraint ends with $\langle\top\rangle$ that is always satisfied. Then, we can express $Q_\mathrm{route}(s, t, \pi)$ as

\noindent \q{MATCH NODES $(s, t)$ 
SUCH THAT $\piszczalka{s}{\pi}{t}$ WHERE $\Path(\pi)$}

\begin{Property}{Sums}
The language \pra{} can express properties of paths' sums. For example, the query below holds iff there is a route from $s$ to $t$ that takes at most 6 hours and its attractiveness is over 100.

\noindent \q{MATCH NODES $(s, t)$ 
SUCH THAT $\piszczalka{s}{\pi}{t}$ WHERE $\Path(\pi)$\\
HAVING $\labelling{time}[\pi] \leq 360 \wedge \labelling{attr}[\pi] > 100$}

Furthermore, we can compute averages, to some extent. For example, the following arithmetical constraint says that for some path $\pi$ the average attractiveness of $\pi$ is at least 4 attractiveness points per minute: \q{$\labelling{attr}[\pi] \geq 4 \labelling{time}[\pi]$}.
\end{Property}

\begin{Property}{Multiple paths}
We define a query that asks whether there is a route from $s$ to $t$, such that from every place we can take a tram (e.g., if it starts to rain).
We express that by stipulating a route $\pi$ from $s$ to $t$ and a sequence $\rho$ of tram links, such that every node of $\pi$
representing a place is connected with the corresponding tram link in $\rho$. In a way, $\rho$ works as an existential quantifier for nodes of $\pi$.
 
\noindent \q{MATCH NODES $(s, t)$ 
SUCH THAT $\piszczalka{s}{\pi}{t}$  \\ WHERE  $\Path(\pi) \wedge 
\langle \labelling{type}(\z_1) = c_{\mathrm{tram}} \rangle^\ast (\rho) \wedge
\mathrm{Link}(\pi, \rho)
 $} \\
where \q{$\mathrm{Link}=
(\langle \labelling{type}(\z_1) = c_{\mathrm{bus}} \rangle + \langle \labelling{type}(\z_1) = c_{\mathrm{walk}} \rangle + \langle \labelling{type}(\z_1) = c_{\mathrm{tram}} \rangle +\langle \labelling{E}(\z_1, \z_2) = 1 \rangle )^\ast{}$} states that every node of the first path
either is not a place, i.e, it represents any of possible links (by a bus, a walk or a tram), or is connected with the corresponding node of the second path.  
Note also that in the regular constraint $\langle \labelling{type}(\z_1) = c_{\mathrm{tram}} \rangle^\ast (\rho)$
the variable $\z_1$ represents the current node of the path $\rho$, whereas, in  
$\mathrm{Link}(\pi, \rho)$ the variable $\z_1$ represents the current node of $\pi$, and $\z_2$ represents the current node of $\rho$.
\end{Property}

\subsection{Language  \opra}
\label{s:opra}
We show how to employ auxiliary labellings in our queries. 
For readability, we introduce some syntactic sugar -- constructions which do not change the expressive power of \opra, but allow queries to be expressed more clearly.
We use the function symbols $=, \neq$ and Boolean connectives, which can be derived from $\leq$ and arithmetical operations. Also, 
we use terms $t(x,y)$ in arithmetical constraints, which can be expressed by first defining the labelling $\definedBy{\labelling{t}(x,y)}{t(x,y)}$,
defining additional paths $\rho_1 = x,\rho_2 = y$ of length $1$, and using $\labelling{t}[\rho_1, \rho_2]$.

\begin{Property}{Processed labellings}
Online route planners often allow to look for routes which do not require much walking.
The following query asks whether there exists a route from $s$ to $t$ such that the total walking time is at most 10 minutes. 
To express it, we define a labelling $\labelling{t\_walk}(x)$, which is the time of $x$ for $x$ that are walking links, and $0$ otherwise.

\noindent \q{LET \(\labelling{t\_walk}(x) := (\labelling{type}(x) = c_\textrm{walk}) \cdot \labelling{time}(x)\) IN\\
MATCH NODES $(s, t)$ SUCH THAT $\piszczalka{s}{\pi}{t}$ \\
WHERE $\Path(\pi)$ 
HAVING \(\labelling{t\_walk}[\pi] \leq 10\)
}
\end{Property}

\begin{Property}{Nested queries}
It is often advisable to avoid crowded places, which are usually the most attractive places. 
We write a query that holds for routes that are always at least 10 minutes away from any node with attractiveness greater than $100$.
We define a labelling  $\labelling{crowded}(x)$ as

\noindent \q{[MATCH NODES $(x)$ SUCH THAT  $\piszczalka{x}{\pi}{y}$  WHERE \\$\Path(\pi) \wedge \langle \top \rangle^\ast \langle \labelling{attr}(\z_1) > 100 \rangle(\pi)$   HAVING  $\labelling{time}[\pi] \leq 10$]}

\noindent Notice that $\pi$ and $y$ are existentially quantified. We check whether the value of \q{$\labelling{crowded}$} is $0$ for each node of the path $\pi$.

\noindent \q{MATCH PATHS $(\pi)$ 
 WHERE $ \Path(\pi) \wedge \langle \labelling{crowded}(\z_1) = 0\rangle^\ast(\pi)$}
\end{Property} 
 
\begin{Property}{Nodes' neighbourhood}
``Just follow the tourists'' is an advice given quite often. With \opra{}, we can verify whether it is a good advice in a given scenario. 
A route is called \emph{greedy} if at every position, the following node on the path is the most attractive successor.
We define a labelling \q{$\labelling{MAS}(x,y)$} that is $1$ if $y$ is the most attractive successor of $x$, and $0$ otherwise: 
\q{\( \labelling{E}(x, y) \land (\fcount(\{  \labelling{attr}(z) \colon \labelling{E}(x, z) \land \labelling{attr}(z) {\geq} \labelling{attr}(y) \})  {=} 1). \) }
 We express that there is a greedy route from $s$ and $t$.\\ 
\q{MATCH NODES \((s, t) \) 
SUCH THAT \(\piszczalka{s}{\pi}{t}\)\\
WHERE \( \langle \labelling{MAS}(\z_1, \z_1') = 1\rangle^\ast\langle\top\rangle(\pi) \)
}
\end{Property}

\begin{Property}{Properties of paths' lengths}
In route planning, we often have to balance time, money, attractions, etc. The following query asks
whether is it possible to get from $s$ to $t$ in a shortest time possible, in the same time maximising the attractiveness of the route.
Recall that $Q_\mathrm{route}$ is a \pra{} query defined in Subsection \ref{subsec:pra}.

\noindent \q{MATCH NODES $(s, t)$ SUCH THAT  $\piszczalka{s}{\pi}{t}$  
WHERE $\Path(\pi)$ \\
HAVING
$(\labelling{attr}[\pi] = \max_{\labelling{attr}, \rho} Q_\mathrm{route}(s, t, \rho)) \wedge\\
(\labelling{time}[\pi] = \min_{\labelling{time}, \rho} Q_\mathrm{route}(s, t, \rho))$}
\end{Property} 

\begin{Property}{Registers}
\sloppy
Registers are an important concept often in graph query languages. 
For instance, to express that two paths have a non-empty intersection, we load a (non-deterministically picked) node from the first path to a register and check whether it occurs in the second path. 
The following query asks whether there exists a route from a club $s$ to a club $t$ on which the attractiveness of visited clubs never decreases.
In the register-based approach, we achieve this by storing the most recently visited club in a separate register. 
Here, we express this register using an additional path $\rho$, storing the values of the register, and a labelling
$\labelling{r}(x', y, y')$ which states that $y' = x'$ if $x'$ is a club, and $y'=y$ otherwise, defined as
 \q{${(\labelling{type}(x') = c_\textrm{club} \Rightarrow y' = x') \land (\labelling{type}(x') \neq c_\textrm{club} \Rightarrow y=y')}$}.

\noindent\q{MATCH NODES $(s, t)$ 
SUCH THAT \(\piszczalka{s}{\pi}{t} \land \piszczalka{s}{\rho}{t}\) \\
WHERE 
\(\Path(\pi) \land \mathrm{ends}(\pi) \land \mathrm{regs}(\pi, \rho) \land \mathrm{inc}(\rho)\)}\\
where 
\q{$\mathrm{regs}=
\langle \labelling{r}(\z_1',\z_2, \z_2') = 1 \rangle^\ast{}\langle \top \rangle$} ensures that at each position the second path contains the most recently visited club along the first path, 
\q{$\mathrm{ends} = \langle \labelling{type}(\z_1)=c_\textrm{club} \rangle \langle \top \rangle^\ast{} \langle \labelling{type}(\z_1) = c_\textrm{club} \rangle$} states that the both ends of a path are clubs, 
and \q{\( \mathrm{inc}=
 \langle\labelling{attr}(\z_1) \leq \labelling{attr}(\z_1') \rangle^\ast{}\langle \top \rangle
\)} checks that the attractiveness never decreases.
\fussy
\end{Property}

 \section{Expressive power}
\label{sec:ep}
\sloppy
We compare the expressive power of \opra{} and other query languages for graph databases from the literature. We prove the results depicted in Figure \ref{fig:diagram}: that \opra{} subsumes ECRPQ, ECRPQ with linear constraints~\cite{Barcelo+12} and LARE~\cite{Grabon+16} query languages. 

\fussy

These query languages assume a different notion of graphs from the one considered in this paper. 
We call graphs as defined in these papers \emph{data graphs}. 
A \emph{data graph} is a tuple 
\(
G = \tuple{V, E, \lambda}
\)
where
$V$ is a finite set of nodes,
$E \subseteq V \times \Sigma \times V$ is a set of edges labelled by a finite alphabet $\Sigma$,
and $\lambda: V \to \Z^K$ is a labelling of nodes by vectors of $K$ integers. 
A path in $G$ is a sequence of interleaved nodes and edge labels $v_0 e_1 v_1 \ldots v_k$ such that for every $i<k$ we have $E(v_{i}, e_{i+1}, v_{i+1})$.

The difference between graphs and data graphs is mostly syntactical, yet it prevents us from comparing directly the languages of interest. 
To overcome this problem, we define the \emph{standard embedding}, which is a natural transformation of data graphs to graphs. 
For a data graph $G = \tuple{V, E, \lambda}$ with edges labelled by $\Sigma$ and nodes labelled by $\Z^K$, we define the graph $G^E = (V^E, \lambda_1^E, \ldots, \lambda_K^E, \lambda_{K+1}^E)$, called the standard embedding of $G$, such that
(1)~$V^E = V \cup \Sigma$, 
(2)~for every $i \in \{1, \ldots, K\}$ and every $v \in V$ we have $\lambda_i^E(v)$ equal to the $i$-th component of $\lambda(v)$,
(3)~for every $i \in \{1, \ldots, K\}$ and every $v \in \Sigma$ we have   $\lambda_i^E(v) = 0$, and
(4)~for all $v_1, v_2, v_3 \in V^E$ we have $\lambda^E_{K+1}(v_1, v_2, v_3) = 1$ if $(v_1,v_2, v_3) \in E$ and  $\lambda^E_{K+1}(v_1, v_2, v_3) = 0$ otherwise.
Observe that every node $v$ (resp., every path $p$) in $G$ corresponds to the unique node $v^E$ (resp., path $p^E$) in $G^E$.

A query $Q_1$ on data graphs is equivalent w.r.t. the standard embedding, \emph{se-equivalent} for short, to a query $Q_2$ on graphs if 
for all data graphs $G$, nodes $\vec{v}$ in $G$ and paths $\vec{p}$ 
query $Q_1(\vec{v}, \vec{p})$ holds in $G$ if and only if $Q_2(\vec{v}^E, \vec{p}^E)$ holds in $G^E$, where
$G^E, \vec{v}^E, \vec{p}^E$ result from the standard embedding of respectively $G, \vec{v}, \vec{p}$.
We say that \opra{} \emph{subsumes} a query language $\lang$ if every query in $\lang$ can be transformed in polynomial time
to an se-equivalent \opra{} query.

\Paragraph{LARE}
Queries in LARE are built from arithmetical regular expressions, which extend regular expressions with registers storing nodes and
arithmetical operations on labels of the nodes stored in registers (which are natural numbers). 
We briefly discuss how to express the three main building blocks of LARE expressions:
\emph{edge constraints} specifying labels of edges, \emph{register constraints} specifying values of registers,
and \emph{register assignments} specifying  how registers change. 

\opra{} queries over the standard embedding can specify labels of edges in the original graph and hence can express edge constraints. 
Next, we arithmetize all logical operations assuming \emph{true}:= $1$ and \emph{false}:= $0$.
With that, we can show by structural induction that \opra{} labellings can express register constraints (e.g., construction $C \vee C'$ can be expressed by 
term $\max(t_C, t_{C'})$).
Finally, we can express registers with additional paths as discussed in Section~\ref{s:opra}.
 
\opra{} language is stronger than LARE. The set of (vectors of) paths satisfying a given LARE query is regular. Therefore, 
for a fixed LARE query $Q(\pi_1, \pi_2)$, we can decide whether $\forall \pi_1 \exists \pi_2 Q(\pi_1, \pi_2)$ holds in a given graph $G$ in polynomial space in $G$.
The set of paths satisfying an \opra{} query $Q$ is also related to automata, but due to linear constraints we can express properties of weighted automata.
We can define an \opra{} query $Q^U(\pi_1, \pi_2)$ which interprets the input graph $G$ as a weighted automaton, path $\pi_1$ as an input word  
and $\pi_2$ as a run $r$ on $\pi_1$; query $Q^U$ holds only if the value of $r$ is at most $0$.
Then, $\forall \pi_1 \exists \pi_2 Q(\pi_1, \pi_2)$ holds only if the value of every word (w.r.t. the weighted automaton corresponding to $G$) is at most $0$. 
Such a problem for weighted automata is called 
(quantitative) universality problem and it is undecidable~\cite{Almagor+2011}. 
Therefore, checking whether a given graph $G$ satisfies $\forall \pi_1 \exists \pi_2 Q(\pi_1, \pi_2)$ is undecidable. 
Thus, no LARE query is se-equivalent to $Q^U$. 
 
\begin{theorem}
(1)~\opra{} subsumes \LARE{}.
(2)~There is an \opra{} query $Q$ with no LARE query $Q'$ se-equivalent to $Q$.
\end{theorem}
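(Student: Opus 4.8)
The plan is to prove the two parts by different means: part~(1) by an explicit polynomial-time translation of an arbitrary \LARE{} query into an se-equivalent \opra{} query, and part~(2) by exhibiting one \opra{} query whose $\forall\exists$-theory over graphs is undecidable, which rules out any se-equivalent \LARE{} query because \LARE{} defines only regular relations. Throughout I work over the standard embedding $G^E$, using that every data-graph node $v$ and path $p$ correspond to the unique $v^E$ and $p^E$.

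For part~(1) I would first recall the three building blocks of \LARE{} expressions — edge constraints, register constraints, and register assignments — and translate them one at a time. Edge constraints merely test the label of the edge currently traversed, so they are handled directly by node constraints of a regular constraint referring to the edge labelling $\lambda^E_{K+1}$ of $G^E$. Each register, which stores a node, is simulated by a fresh auxiliary path, exactly as in the Registers example of Section~\ref{s:opra}: a register assignment is enforced by a labelling $\labelling{r}$ relating the current node of the main path, the old register value, and the new one, embedded into a regular constraint of the form $\langle \labelling{r}(\dots) = 1\rangle^\ast\langle\top\rangle$. Register constraints — arithmetical tests on labels of stored nodes — are compiled into \opra{} terms by structural induction after arithmetizing the Booleans ($\mathit{true} := 1$, $\mathit{false} := 0$): conjunction becomes $\cdot$, disjunction becomes $\max$, and comparisons are built from $+,-,\cdot,\leq$, the resulting terms defining auxiliary labellings used as node constraints. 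Finally, the outer regular-expression structure of the \LARE{} query lifts verbatim to a regular constraint over the main and register paths. It then remains to check that each step incurs only polynomial blow-up, yielding a polynomial-time se-equivalent translation.

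For part~(2) I would define an \opra{} query $Q^U(\pi_1,\pi_2)$ that reads the input graph $G$ as a weighted automaton: its nodes are states, the edge labelling supplies the (letter-labelled) transition relation, and a further labelling supplies transition weights. A regular constraint forces $\pi_1$ to be an input word and $\pi_2$ to be a run over $\pi_1$, and an arithmetical constraint of the form $\labelling{weight}[\pi_2] \leq 0$ demands that the value of the run be at most $0$. Then $\forall \pi_1 \exists \pi_2\, Q^U(\pi_1,\pi_2)$ holds in $G$ exactly when every input word admits a run of value at most $0$, i.e.\ when the encoded automaton passes the quantitative universality test, which is undecidable by~\cite{Almagor+2011}; since the standard embedding realizes every such automaton as some $G^E$, deciding $\forall \pi_1 \exists \pi_2\, Q^U$ over graphs is undecidable. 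On the other hand, a \LARE{} query defines a regular set of path tuples, so for any fixed \LARE{} query $Q'(\pi_1,\pi_2)$ the problem $\forall \pi_1 \exists \pi_2\, Q'(\pi_1,\pi_2)$ is decidable (in \PSpace{}) over a fixed graph via automata-theoretic projection and complementation. If some \LARE{} $Q'$ were se-equivalent to $Q^U$, the two $\forall\exists$-problems would coincide on all data graphs through the standard embedding, making the undecidable problem decidable — a contradiction, so no such $Q'$ exists.

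The main obstacle I anticipate is in part~(2): designing $Q^U$ so that $\pi_2$ ranges over exactly the runs of the automaton and the arithmetical constraint captures precisely ``value $\leq 0$'' for the intended value function, while simultaneously confirming that the encoding is faithful under the standard embedding, so that the universality reduction genuinely lands inside queries of the form $Q^U(\vec v^E,\vec p^E)$. The supporting facts — that \LARE{} defines regular relations and that $\forall\exists$ over a regular relation is decidable over a fixed graph — I would take from the properties of \LARE{} noted before the theorem. In part~(1) the only delicate point is the faithful simulation of register assignments by synchronized auxiliary paths together with verifying that the inductive term-translation of register constraints preserves semantics; these are routine but must be carried out carefully to keep the translation polynomial.
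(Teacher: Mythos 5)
Your overall strategy coincides with the paper's on both parts: part~(1) by compiling edge constraints, registers (as synchronized auxiliary paths, following the Registers example) and register constraints (as terms, after arithmetizing Booleans) into \opra{}, and part~(2) by the same reduction from quantitative universality of weighted automata~\cite{Almagor+2011}, played off against the \PSpace{} decidability of $\forall x\,\exists y\,Q'(x,y)$ for any fixed \LARE{} query $Q'$. However, two concrete steps are missing or would fail as written. In part~(1), your structural induction on register constraints omits two productions of the \LARE{} grammar: existential quantification over registers $\exists r.\,C$ (which the paper eliminates by introducing a fresh path carrying the witness at every position) and, more importantly, \emph{nested} \LARE{} queries applied to register contents. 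The latter is precisely why the paper's proof is an induction on the nesting depth of the query, the translation of the outer query invoking already-constructed se-equivalent \opra{} queries for the nested ones via terms of the form $[Q']$. Without that outer induction, a ``structural induction on register constraints'' is not well-founded, since a register constraint may contain a full \LARE{} query.

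In part~(2), you encode the weighted automaton with states as nodes and weights attached to edges, and then impose $\labelling{weight}[\pi_2]\leq 0$. But an arithmetical constraint in \pra{} sums $\lambda(p_{i_1}[i],\ldots,p_{i_k}[i])$ over \emph{synchronized positions of (possibly several) paths}; it cannot directly sum a binary labelling over \emph{consecutive} positions of a single path, so under your encoding $\labelling{weight}[\pi_2]$ is not a legal expression for the value of the run. The paper sidesteps this by taking the \emph{transitions} of $\aut$ as the nodes of $G_{\aut}$, so that weight, letter, and initial/final information all become unary node labellings and the run value is exactly $\lambda_c[\pi_2]$; composability of consecutive transitions is then enforced by the binary edge labelling inside the regular constraint. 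Your version is repairable (adopt the paper's encoding, or introduce a shifted auxiliary copy of $\pi_2$), but as stated the key arithmetical constraint is ill-typed. The remaining caveat you flag yourself --- that the graphs produced by the reduction must be compatible with the standard embedding for the se-equivalence contradiction to go through --- is treated at the same level of detail in the paper.
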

\begin{proof}(of (1))
\newcommand{\rx}{\alpha}
\newcommand{\ARE}{\textsc{are}}
\newcommand{\R}{{\mathbb{R}}}
\newcommand{\mygets}{\hspace{-1pt}\gets\hspace{-1pt}}
\newcommand{\N}{{\mathbb N}}
The whole proof is by induction on the nesting depth of a given LARE query $Q(x_1, \ldots, x_k)$. We assume that $Q(x_1, \ldots, x_k)$ has nesting depth $s$, and for all its 
nested queries there is an se-equivalent \opra{} query $Q'$.

The proof of Theorem 3 in~\cite{Grabon+16} starts with an observation that every LARE query $Q(x_1, \ldots, x_k)$ can be transformed to 
a query of the form $\piszczalka{y_1}{y_2}{y_3} \wedge \ldots \wedge \piszczalka{y_{m}}{y_{m+1}}{y_{m+2}} \wedge R_Q(x_1, \ldots, x_k)$, where 
$y_1, \ldots, y_{m+2} \in \{x_1, \ldots, x_k\}$, and the expression  $\piszczalka{y_1}{y_2}{y_3}$ denotes that $y_2$ is a path from $y_1$ to $y_3$, and
$R_Q(x_1, \ldots, x_k)$ is defined by some  $k$-ary arithmetical regular expression (\ARE{}) $\rx{}$. 
Observe that $\piszczalka{y_1}{y_2}{y_3}$ is expressed by the \opra{} query $Q_\mathrm{route}(s, t, \pi)$ defined in Section~\ref{subsec:pra}. 
Since \opra{} queries are closed under conjunction, it suffices to show that for every \ARE{} $\rx{}$ there exists a se-equivalent \opra{} query $Q^{\rx{}}$. 
To show that, we briefly recall the definition of \ARE{}.

\ARE{} are regular expressions with arithmetical functions and memory. The memory is formalized as an infinite set of registers $\R = \set{r_i \mid i \in \N}$, storing nodes.
Formally, $n$-ary \emph{arithmetical regular expressions} $\rx{}$ are defined in the following way:
\begin{align*}
  \rx{} ::= &\epsilon \mid \naww{C} \mid [r \mygets j] \mid \vec{e} \mid \rx{} + \rx{} \mid \rx{} . \rx{} \mid \rx{}^+ 
\end{align*}
where
$r$ ranges over $\R$,  and  $C$ ranges over \emph{register constraints}.

The expression  $[r \mygets j]$, read ``load $p_j$ to the register $r$'', states that the value of $r$ at the next step 
(past some letter $\vec{e}$) is equal to the current node stored on path $p_j$.
All the nodes of the paths have to be stored in registers  using the $[r \mygets j]$ syntax prior to their access. 
The expression $\naww{C}$ holds if the nodes stored in the registers satisfy the register constraint $C$. 
Finally, the expression $ \vec{e}$ denotes that the labels of the current edges on input paths $p_1, \ldots, p_n$
are respectively, $e_1, \ldots, e_n$, and $\vec{e} = (e_1,\ldots, e_n)$. This construct can be straightforwardly converted to an expression on 
labels in the standard embedding. 
In the following, we first discuss how can we implement registers, and then we define register constraints and discuss 
how can we express them through auxiliary labellings.

To simulate registers, for every register $r$ occurring in the \ARE{} $\rx{}$, we introduce a fresh path $p_r$. Then, we express register updates 
as in the example {Registers} in Section~\ref{s:opra}.
More precisely, for simplicity we discuss a transformation involving an exponential-blow up of the query.
Any \ARE{} can be considered as a regular expression over the alphabet of register constraints $\naww{C}$, register updates $[r \mygets j]$ and edge labels $\vec{e}$. 
Observe that $\rx{}$ can be transformed in polynomial time into another \ARE{} $\rx{}'$, in which all letters are combined into blocks of the form 
$\naww{C} [r_1 \mygets j_1] \ldots [r_m \mygets j_m]  \vec{e}$, where $r_1, \ldots, r_m$ are all registers. Basically, we need to transform $\rx{}$ into an automaton, and analyze all paths
between letters $\vec{e}$. Since checks of register constraints $\naww{C}$ and register updates $[r \mygets j]$ are commutative among one another, there are exponentially many different paths
between any two transitions involving letter $\vec{e}$. Moreover, a sequence of register constraints can be substituted by a single register constraint. Also,
if there is no register update for some register, we introduce an additional construct $[r \mygets 0]$, denoting that $r$ does not change its value. 
Next, we can build  \ARE{} $\rx{}'$, which is equivalent to $\rx{}$ and all its letters are of the form 
$\naww{C} [r_1 \mygets j_1] \ldots [r_m \mygets j_m]  \vec{e}$. For \ARE{} $\rx{}'$, for every sequence of register updates
$[r_1 \mygets j_1] \ldots [r_m \mygets j_m]$, we define an auxiliary labelling that 
enforces that the nodes at the paths in the next step are consistent with the sequence (see the example {Registers} in Section~\ref{s:opra}).

Now, we discuss how to express register constraints.
\emph{Register constraints} $C$ and \emph{arithmetical expressions} $P$ are defined in the following way:
\begin{align*}
  C ::= &C \lor C
  \mid \neg C 
  \mid \exists r . C
  \mid f(P, \dots, P) \leq f(P, \dots, P)\\
  &\mid r = r
  \mid E(r, e, r)
  \mid \query{Q}(r, \dots, r)\\
  P ::= &\lambda(r) \mid f[r : C]\\
  \end{align*}
where
$r$ ranges over the set of registers $\R$ and 
$f$ ranges over the set of fundamental functions $\F$ (as defined in this paper).

We briefly explain the above constructs. Register constraints are computed presuming some values of registers; they return Boolean values, and hence we can take disjunction $(C_1 \lor C_2)$ and negation ($\neg C'$)
of register constraints. We can also quantify existentially the values of registers ($\exists r . C$).
A register constraint can check inequality between the values of arithmetical expressions ($f_1(P_1, \dots, P_k) \leq f_2(P_1', \dots, P_l')$). 
In register constraints we can check whether two registers store the same node ($r = r'$) or whether
two registers are connected with an edge labelled by $e$ ($E(r,e,r')$). Finally, we can check whether a nested query holds by passing the values of some registers as the input to the query
 ($\query{Q}(r_1, \dots, r_l)$).

Arithmetical expressions return natural numbers presuming some values of registers. An arithmetical expression can either take the value of the labelling of the node stored in some register ($\lambda(r)$), or
it can compute a function $f$ applied to all nodes of the graph satisfying a register constraint $C$ ( $f[r : C]$).

We show that for every register constraint $C$ and arithmetical constraint $P$, there exist terms $t_C, t_P$ defining labellings equivalent to $C$ and $P$
, i.e.,
the register constraint $C$ holds (resp,. an arithmetical expression $P$ have the value $v$) with register values $v_1, \ldots, v_m$ if and only if 
the term $t_C$ (resp., $t_P$) defines the labelling that returns  $1$ (resp., returns $v$) provided that the paths corresponding to the registers store the nodes 
$v_1, \ldots, v_m$.

Before the construction, we remove all existential quantifiers from register constraints: for each existential quantification $\exists r . C$ we introduce a fresh path $p'$ and, then,
we substitute $r$ with a fresh register $r'$ that takes the value in the path $p'$. Basically, at every position the path $p'$ stores the value of the existential quantification.

Now, assume that $C$ and $P$ contain no subexpression of the form $\exists r . C$. We construct the terms $t_C, t_P$ by induction.
For clarity, we assume that all terms take all paths as arguments (therefore we do not have to specify path variables).
In the following, we denote by $y_r$ the path variable that stores the values of the register $r$.

\noindent \emph{Basis of induction}.
As the induction basis for register constraints we take $C :=  r = r  \mid E(r, e, r) \mid \query{Q}(r, \dots, r)$.
For $C := r = r'$, we define $t_C := y_{r} = y_{r'}$.
For $C := E(r, e, r')$, we define $t_C := \lambda_E(y_{r}, c_e, y_{r'})$, where $c_e$ is a constant corresponding to $e$.
Finally, for $C := \query{Q}(r, \dots, r)$, by the main induction assumption, there is an \opra{} query $Q'$, which is se-equivalent to $Q$.
Then, we define $t_C := [Q']$.

As the induction basis for arithmetical expressions we take $P := \lambda(r)$, which we transform to $t_P = \lambda_{i}(y_r)$, where 
 $\lambda_{i}$ is the labelling in the standard embedding that corresponds to the standard labelling of nodes $\lambda$ in the data graph.

\noindent \emph{Induction step}. For $C := C_1 \lor C_2$, we define $t_C := max(t_{C_1},t_{C_2})$.
For $C := \neg C_1$, we define $t_C := 1 - t_{C_1}$.
For $C: = f_1(P_1, \dots, P_k) \leq f_2(P_1', \dots, P_l')$, we define 
$t_C := f_1(t_{P_1}, \dots, t_{P_k}) \leq f_2(t_{P_1'}, \dots, t_{P_l'})$.

For $P := f[r : C]$, we define $t_P := f(\{\lambda_i(y_r) \mid t_C \})$, where 
$\lambda_{i}$ is the labelling in the standard embedding that corresponds to the standard labelling of nodes $\lambda$ in the data graph.
\end{proof}
\smallskip

The proof of (2) of Theorem 1 consists of the following two lemmata. We show that for every binary LARE query $Q(x,y)$, we can decide in $\PSpace$
whether $G \models \forall x \exists y Q(x,y)$ (Lemma~\ref{l:LARESandAREAo}). In contrast, there is an \opra{} query $Q^U$ for which deciding $G \models \forall x \exists y Q^U(x,y)$  is undecidable (Lemma~\ref{l:LARESandAREA}).
Therefore, there is no LARE query $Q'$, which is se-equivalent to $Q^U$.

\begin{lemma}
\label{l:LARESandAREAo}
For every \LARE{} query $Q(x,y)$, the problem, given a graph $G$, decide whether $\forall x \exists y Q(x,y)$ 
holds in $G$ is decidable in polynomial space. 
\end{lemma}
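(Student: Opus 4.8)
The plan is to exploit the regularity of \LARE{} queries over a fixed graph, which is precisely the property invoked in the discussion preceding Theorem~1. Once the graph $G$ is fixed, the set of pairs of paths satisfying $Q(x,y)$ is regular: every register stores a node of $G$, so a register valuation is an element of $V^{m}$, where $m$ is the (constant, since $Q$ is fixed) number of registers occurring in $Q$, and every register constraint only compares labellings and aggregates of nodes of $G$, whose values range over a finite, graph-dependent set. Hence I would build a nondeterministic automaton $\aut_Q$ that reads a synchronised pair of paths $(p_1,p_2)$ letter by letter, carries the current register valuation inside its state, checks register constraints and applies register updates, and accepts exactly when $Q(p_1,p_2)$ holds. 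Because $m$ is constant, the register component $V^{m}$ — and therefore $\aut_Q$ as a whole — has size polynomial in $|G|$.

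Next I would eliminate the inner existential by projection. Guessing nondeterministically the letters describing $p_2$ turns $\aut_Q$ into an automaton $\mathcal{B}$ of the same polynomial size that recognises $\set{p_1 \mid \exists p_2 \; Q(p_1,p_2)}$. The statement $\forall x \exists y\, Q(x,y)$ then says exactly that every admissible value of $x$ lies in $L(\mathcal{B})$. If $x$ is a node variable this is a check over the finitely many nodes of $G$ and is immediate; the substantial case is when $x$ ranges over paths, where the condition becomes the language containment $L_{\mathrm{paths}} \subseteq L(\mathcal{B})$, with $L_{\mathrm{paths}}$ the polynomial-size automaton accepting the valid paths of $G$ (consecutive nodes joined by an edge, as enforced by $\Path$).

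To decide this containment in \PSpace{}, I would complement $\mathcal{B}$ on the fly. The negation $L_{\mathrm{paths}} \not\subseteq L(\mathcal{B})$ asserts the existence of a valid path $p_1$ rejected by $\mathcal{B}$, so I would guess $p_1$ node by node while maintaining two objects of polynomial size: the determinised state of $\mathcal{B}$ (a subset $S$ of its states, updated by the subset construction) together with the current node used to verify edge-validity. A witness is a run reaching a configuration in which the guessed path is complete and valid, yet $S$ contains no accepting state of $\mathcal{B}$. Since this search uses only polynomial space, the complement problem lies in nondeterministic polynomial space, and by Savitch's theorem the containment problem — hence the original decision problem — is in \PSpace{}.

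The main obstacle I anticipate is the universal quantification over paths of unbounded length: it forces a complementation of the nondeterministic automaton $\mathcal{B}$, which in general is \PSpace{}-complete, so the argument only remains in polynomial space because $\aut_Q$ and $\mathcal{B}$ are themselves polynomially sized. Guaranteeing this is exactly where fixing the query is essential — it bounds the number of registers by a constant, keeping each register valuation in $V^{O(1)}$, and it ensures that the label-arithmetic inside register constraints attains only boundedly many values over the fixed graph, so that the constraints reduce to decidable predicates on a finite register domain.
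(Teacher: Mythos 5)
Your proposal is correct and follows essentially the same route as the paper: construct a polynomial-size automaton over the fixed graph recognising the pairs of paths satisfying $Q$ (the paper simply cites the construction from the \LARE{} paper, whereas you sketch it via register valuations in $V^m$), project away the $y$-component to obtain an automaton for $\exists y\, Q(x,y)$, and decide universality over the valid paths of $G$ in polynomial space. The on-the-fly subset construction you describe is the standard way to carry out the final containment check that the paper leaves implicit.
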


\begin{proof}
It has been shown in~\cite{Grabon+16} that for every graph $G$, 
we can compute in polynomial time an automaton $\aut_G$ 
recognizing pairs of paths $(\pi_1,\pi_2)$ from $G$ satisfying $Q(x, y)$ in $G$. 
We modify $\aut$ to nondeterministically guess a value for $y$ while processing the input. 
We do this by removing from the transitions the component corresponding to $y$. 
The resulting automaton $\aut'$ recognizes $P(x) = \exists y Q(x)$.  
Next, we check whether $\aut'$ accepts all paths in $G$, which can be done in polynomial space.
\end{proof}

\smallskip

\begin{lemma}
There exists a \opra{} query $Q^U(x,y)$ such that
the following problem is undecidable: given a graph $G$, decide whether $\forall x \exists y Q^U(x,y)$ holds in $G$.
\label{l:LARESandAREA}
\end{lemma}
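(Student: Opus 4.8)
The plan is to reduce from the (quantitative) universality problem for nondeterministic weighted automata over finite words, which is undecidable by~\cite{Almagor+2011}: given a weighted automaton $\mathcal{W}$ with integer transition weights, decide whether \emph{every} word $w$ has value at most $0$, where the value of $w$ is the minimum, over all accepting runs on $w$, of the sum of the weights along the run (and $+\infty$ if $w$ has no accepting run). I will exhibit one fixed \opra{} query $Q^U(x,y)$ together with a polynomial-time map $\mathcal{W}\mapsto G_{\mathcal{W}}$ such that $\forall x\,\exists y\,Q^U(x,y)$ holds in $G_{\mathcal{W}}$ if and only if $\mathcal{W}$ is universal. Since $Q^U$ is a single fixed query and the target problem is undecidable, this yields the lemma.

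The graph $G_{\mathcal{W}}$ will contain a \emph{letter node} $L_a$ for each alphabet symbol $a$, a \emph{transition node} $T_\tau$ for each transition $\tau=(p,a,q)$ of $\mathcal{W}$, and the sink $\padding$. The (fixed) labelling symbols encode $\mathcal{W}$: $\labelling{isletter}$ marks letter nodes; $\labelling{letterid}$ returns a code of $a$ on $L_a$ and on every $T_{(p,a,q)}$, with a dedicated value on $\padding$ differing from all letter codes; $\labelling{weight}(T_\tau)$ is the weight of $\tau$ and $0$ elsewhere; $\labelling{fromstart}(T_\tau)=1$ iff $\tau$ leaves the initial state; $\labelling{toaccept}(T_\tau)=1$ iff $\tau$ enters an accepting state; and $\labelling{E}(T_{(p,a,q)},T_{(p',a',q')})=1$ iff $q=p'$. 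Since a \pra{} path is an \emph{arbitrary} node sequence, the quantifier $\forall x$ also ranges over ``junk'' paths, so the key design point is that $Q^U$ must be satisfiable for every junk $x$ and must reduce to ``the value of $x$ is $\le 0$'' when $x$ is a genuine word. I will set $Q^U(x,y) := R'(x,y)\wedge \labelling{weight}[y]\le 0$, where $R'(x,y)$ is the single regular constraint $\mathrm{NotWord}(x)+\mathrm{Run}(x,y)$. Here $\mathrm{NotWord}(x)=\langle\top\rangle^*\langle \labelling{isletter}(\z_1)=0\rangle\langle\top\rangle^*$ holds iff some node of $x$ is not a letter node, and $\mathrm{Run}(x,y)$ checks that $y$ is an accepting run reading $x$: at each position the letter codes of $\z_1$ and $\z_2$ agree, consecutive $y$-nodes are composable ($\labelling{E}(\z_2,\z_2')=1$), the first $y$-node satisfies $\labelling{fromstart}$ and the last satisfies $\labelling{toaccept}$ (with the obvious union covering length-one runs). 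Because a node constraint is a single comparison, each per-position conjunction is packed into an auxiliary labelling introduced by a \texttt{LET} clause, exactly as $\labelling{MAS}$ is used in Section~\ref{s:opra}; e.g.\ a ternary $\labelling{step}(a,b,c):=(\labelling{letterid}(a)=\labelling{letterid}(b))\cdot(\labelling{E}(b,c)=1)$, tested by $\langle \labelling{step}(\z_1,\z_2,\z_2')=1\rangle$, and similarly for the start and accepting positions. Note that $Q^U$ only ever compares $\labelling{letterid}(\z_1)$ with $\labelling{letterid}(\z_2)$ and tests the $0/1$ labellings, so it never names a specific symbol or state and is therefore independent of $\mathcal{W}$.

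Correctness then splits into two cases. If $x$ is not a sequence of letter nodes, then $\mathrm{NotWord}(x)$ holds regardless of $y$, so $R'$ is satisfied by the length-one witness $y=\padding$, for which $\labelling{weight}[y]=\labelling{weight}(\padding)=0\le 0$; hence $\exists y\,Q^U(x,y)$. If $x$ is a word, then $\mathrm{NotWord}(x)$ fails, so $R'(x,y)$ forces $y$ to be a genuine accepting run on $x$ (the $\padding$-witness no longer satisfies $R'$, and mismatched lengths are rejected automatically because $\labelling{letterid}(\padding)$ differs from every letter code); the constraint $\labelling{weight}[y]\le 0$ then forces the total weight of that run to be at most $0$. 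Thus for a word $x$ we have $\exists y\,Q^U(x,y)$ iff $x$ admits an accepting run of weight $\le 0$, i.e.\ iff the value of $x$ is $\le 0$. Combining the two cases, $\forall x\,\exists y\,Q^U(x,y)$ holds in $G_{\mathcal{W}}$ exactly when every word has value at most $0$, i.e.\ exactly when $\mathcal{W}$ is universal.

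I expect the main obstacle to be precisely this interplay with the universal quantifier over \emph{all} paths: the query must be vacuously satisfiable on junk inputs yet still ``bite'' on real words, which is what the $\mathrm{NotWord}$ disjunct together with the weight-$0$ $\padding$-witness achieves. A secondary technical point is that node constraints are single comparisons, so every per-position conjunction (letter matching, composability, the start/accept tests, and the length bookkeeping via $\padding$) must be compiled into auxiliary labellings through the ontology mechanism; and one must verify that the resulting min-over-runs semantics matches exactly the threshold universality shown undecidable in~\cite{Almagor+2011}.
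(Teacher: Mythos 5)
Your reduction is essentially the paper's: both arguments reduce from the quantitative universality problem for weighted automata with threshold $0$~\cite{Almagor+2011}, encode accepting runs as paths of transition nodes glued by an edge labelling, enforce position-wise letter agreement between $x$ and $y$ through a regular constraint, and impose the value bound by the arithmetical constraint $\labelling{weight}[y]\le 0$. You are in fact more explicit than the paper on the one delicate point, namely that $\forall x$ ranges over arbitrary node sequences, so the query must be trivially satisfiable on ``junk'' $x$; your use of letter nodes for $x$ (rather than the paper's transition sequences) is a harmless variation that even spares you a completeness assumption on the automaton.

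However, the specific escape mechanism you chose breaks on the padding semantics of regular constraints, and this is a genuine gap. $\mathrm{NotWord}(x)$ is a disjunct of the single two-path constraint $R'(x,y)$, so it is evaluated over the joint word of length $\max(|x|,|y|)$, in which every position $i>|x|$ carries $x[i]=\padding$. Since $\padding$ is not a letter node, $\labelling{isletter}(\padding)=0$, and hence for \emph{every} genuine word $x=L_{a_1}\cdots L_{a_n}$ the choice $y=\padding^{\,n+1}$ makes $\mathrm{NotWord}(x)$ match (at position $n+1$) while $\labelling{weight}[y]=0\le 0$. Thus $\exists y\,Q^U(x,y)$ holds for all $x$, the sentence $\forall x\,\exists y\,Q^U(x,y)$ is true regardless of $\mathcal{W}$, and the reduction collapses; your step ``if $x$ is a word, then $\mathrm{NotWord}(x)$ fails'' is false. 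Setting $\labelling{isletter}(\padding)=1$ instead breaks the other direction: $x=\padding^{\,n}$ then escapes $\mathrm{NotWord}$ yet admits no matching run, because $\labelling{letterid}(\padding)$ matches no transition node, so the universal sentence would never hold. The repair is to make the escape clause fire exactly on paths that are not of the form ``letter nodes followed only by $\padding$'': introduce an auxiliary labelling that is $1$ on nodes that are neither letters nor $\padding$, add a disjunct catching a $\padding$ followed by a non-$\padding$ on $x$ (using $\z_1,\z_1'$), and let $\mathrm{Run}$ tolerate the trailing $\padding$-suffix of $x$. All of this is expressible, but it is missing from the proposal as written.
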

We show how to construct a query and a graph that 
encode the universality problem for weighted automata with weights $-1,0,1$, 
which is undecidable~\cite{Almagor+2011}.
A weighted automaton is an automaton whose transitions are labelled by integers called weights. 
The value of a run is the sum of the weights of its transitions. 
The value of a word is the minimum over all values of accepting runs on this word.
The universality problem asks, given a weighted automaton, whether the value of every word is at most $0$. 

Given a weighted automaton $\aut$,
we define a graph $G_{\aut}$ whose nodes are transitions of $\aut$.
The graph $G_{\aut}$ has $4$ unary labellings $\lambda_c, \lambda_a, \lambda_s, \lambda_f$ and one binary labelling $\lambda_E$.
A node $u$ of $G_{\aut}$, corresponding to a transition $\alpha$, has the following labelling:
\begin{compactenum}
\item $\lambda_c(u)$ is the weight of the transition $\alpha$,
\item $\lambda_a(u)$ is the unique number corresponding to the letter labelling $\alpha$,
\item $\lambda_s(u)$  is $1$ if $\alpha$ starts from the initial state and $0$ otherwise, and 
\item $\lambda_f(u)$  is $1$ if $\alpha$ ends in a final state and $0$ otherwise.
\end{compactenum}
The labelling $\lambda_E(u_1, u_2)$ is $1$ if the destination state of (the transition corresponding to) $u_1$ is the same as the source of $u_2$, i.e., $u_1 = (q,a,q')$ and $u_2 = (q',b,q'')$ 
for some states $q, q', q''$ and some letters $a,b$.
Otherwise, $\lambda_E(u_1, u_2)$  is $0$.
Recall that we define a \emph{route} as a path $v_1, \ldots, v_k$ such that for all successive nodes $v_i, v_{i+1}$ we have $\lambda_E(v_i, v_{i+1}) = 1$.
Observe that the answer to the universality problem for $\aut$ is YES if and only if for every route $\pi_1$ from $G_{\aut}$
 there exists a route $\pi_2$ which corresponds to an accepting run on the word defined by $\pi_1$
of value at most $0$. Formally, $\pi_2$ has to satisfy the following:
(1)~\emph{($\pi_1$ and $\pi_2$ encode the same word)} the labels $\lambda_a$  along $\pi_2$ are the same as along $\pi_1$,
(2)~\emph{($\pi_2$ is a run)} $\pi_2$ starts in (the node corresponding to) a transition from some initial state of $\aut$ and 
finishes in a transition to some accepting state of $\aut$ (we can encode that with $s$ and $f$ components of nodes' labels), and
(3)~\emph{(the value of $\pi_2$ does not exceed $0$)} the sum of labels $\lambda_c$   along $\pi_2$ 
does not exceed $0$, i.e., the arithmetical constraint $\lambda_c[\pi_2] \leq 0$ holds.

Let $Q^U$ be a query encoding (1), (2) and (3). It follows that $\forall x \exists y Q^U(x,y)$ holds in 
$G_{\aut}$ if and only if the answer to the universality problem for $\aut$ is YES.
This concludes the proof of undecidability.

\begin{proof}(of (2) from Theorem 1)
Lemma~\ref{l:LARESandAREAo}  implies that for the query $Q^U$ from Lemma~\ref{l:LARESandAREA} there exists no equivalent \LARE{} query.
\end{proof}

\Paragraph{ECRPQ with linear constraints}
ECRPQ has been extended with linear constrains (ECRPQ+LC)~\cite{Barcelo+12}, expressing that a given vectors paths $\vec{\pi}$ satisfying a given ECRPQ query satisfies linear inequalities, which specify the multiplicity of edge labels in various components of $\vec{\pi}$. 
Language \opra{} subsumes LARE, which extends ECRPQ, and hence \opra{} subsumes ECRPQ.  
Linear constraints can be expressed by arithmetical constraints of \opra{} and hence \opra{} subsumes ECRPQ+LC. 
Moreover, linear constrains are unaffected by nodes' labels and hence ECRPQ+LC cannot express a \pra{} query saying 
``the sum of integer labels of nodes along path $p$ is positive''. 
Thus, we have the following. 

\begin{theorem}
(1)~\opra{} subsumes ECRPQ+LC.
(2)~There is an \opra{} query $Q$ with no ECRPQ+LC query $Q'$ se-equivalent to $Q$.
\end{theorem}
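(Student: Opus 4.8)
The plan is to bootstrap from Theorem~1. Since \opra{} subsumes \LARE{} and \LARE{} extends ECRPQ, I would first translate, in polynomial time, the underlying ECRPQ of the given ECRPQ+LC query into an se-equivalent \opra{} query $Q_0$; it then remains only to express the linear constraints. Recall that such a constraint is a conjunction of linear inequalities over the quantities $|\pi_i|_a$, the number of occurrences of edge label $a$ in the path component $\pi_i$. The crucial point is that every such count is precisely the kind of value that an \opra{} arithmetical constraint aggregates along a path. Concretely, for each edge label $a \in \Sigma$ I would introduce an auxiliary labelling $\lambda_a$ over the standard embedding with $\lambda_a(x) = 1$ when $x$ is the unique node representing $a$ and $\lambda_a(x) = 0$ otherwise, defined exactly as the edge-label test used in the proof of Theorem~1 (via the constant $c_a$ and the ternary edge labelling). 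Because in the standard embedding a path visits the node $a$ once for every use of the edge label $a$, the aggregate $\lambda_a[\pi_i]$ equals $|\pi_i|_a$. Thus each inequality $\sum_{i,a} c_{i,a} |\pi_i|_a \le d$ becomes the arithmetical constraint $\sum_{i,a} c_{i,a}\, \lambda_a[\pi_i] \le d$, and the full linear constraint becomes a conjunction of such arithmetical constraints conjoined with $Q_0$. The construction is plainly polynomial, giving~(1).

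\textbf{Part (2).} For the separation I would exhibit one witness query and then argue non-expressibility by an invariance property. Let $Q$ be the variable-free \pra{} (hence \opra{}) query that holds iff there is a route $\pi$ whose sum of node labels is positive; over the standard embedding this is the arithmetical constraint $\lambda_1^E[\pi] > 0$ (edge-label nodes contribute $0$, so the aggregate is exactly the sum over the original nodes) conjoined with a regular constraint forcing $\pi$ to be a route. The key observation is that an ECRPQ+LC query can never read node labels: its regular relations range over edge labels and its linear constraints bound edge-label multiplicities, so neither component touches the labelling $\lambda \colon V \to \Z^K$. Hence any ECRPQ+LC query returns the same answer on two data graphs sharing the same $V$ and $E$ but differing in $\lambda$.

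I would then instantiate this with $G_1$ and $G_2$ over the same underlying graph --- say a single edge labelled $a$ from $v_0$ to $v_1$ --- with $\lambda(v_0)=1$ in $G_1$ (so its unique route has a positive node-label sum) and all node labels $0$ in $G_2$. Then $Q$ holds on $G_1^E$ but fails on $G_2^E$, whereas by the invariance observation every ECRPQ+LC query $Q'$ satisfies $Q'(G_1) = Q'(G_2)$. Were some $Q'$ se-equivalent to $Q$, we would obtain $Q'(G_1) = Q(G_1^E) \ne Q(G_2^E) = Q'(G_2)$, a contradiction; hence no ECRPQ+LC query is se-equivalent to $Q$.

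\textbf{Main obstacle.} I expect the delicate point to be making the invariance claim of part~(2) fully rigorous, as it rests entirely on the precise semantics of ECRPQ+LC over data graphs: one must verify that neither the regular relations of the base ECRPQ nor the linear constraints have any access to $\lambda$, so that the answer genuinely depends only on $(V,E)$ and the edge labelling. Once this is pinned down, the two-graph separation is immediate. In part~(1) the only points needing care are confirming that the indicator labellings $\lambda_a$ count edge-label occurrences correctly under the standard embedding (a single node represents each $a$ and is visited once per use), and that the linear constraints of ECRPQ+LC are conjunctions of linear inequalities, so that \opra{}'s conjunctive arithmetical constraints suffice to capture them.
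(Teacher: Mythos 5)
Your proposal is correct and follows essentially the same route as the paper: for (1) it reduces the ECRPQ part to \opra{} via the \LARE{} subsumption and captures the linear constraints with indicator labellings counting edge-label occurrences in the standard embedding, and for (2) it uses the same witness (``the sum of node labels along a path is positive'') together with the observation that ECRPQ+LC is blind to node labels. Your part (2) is in fact slightly more rigorous than the paper's, which states the node-label invariance without spelling out the two-graph separation.
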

\begin{proof}
To prove the lemma, we need to formally define ECRPQs with linear constrains~\cite{Barcelo+12}.
A linear constraint is given by $h>0$, a $h \times (|\Sigma| \cdot n)$  matrix $A$ with integer coefficient and a vector $\vec{b} \in \Z^{h}$. 
A tuple of paths $\vec{\pi}$ satisfies this constraint if $A \vec{l} \leq \vec{b}$ holds for the vector 
$\vec{l} = (l_{1,1}, \ldots, l_{|\Sigma|,1}, l_{1,2}, \ldots, l_{|\Sigma|, n})$, where $l_{j,i}$ is the number of occurrences of the $j$-th edge label
on the $i$-th component of $\vec{\pi}$.
In ECRPQ+LC, we require a tuple of paths to satisfy both the ECRPQ part and linear constraints.

\noindent \textbf{(1):}
Language \opra{} can express all ECRPQ queries with linear constraints.
First, the query language \LARE{} can express ECRPQs~\cite{Grabon+16}. 
Second, similarly to the example {Processed labellings} from Section~\ref{s:opra}, we can define labelling $\lambda_{i,a}$, which is $1$
if in path $\pi_i$,  the current node (in the standard embedding) represents an edge labelled by $a$ (from the data graph). 
Thus, having an ECRPQ query with linear constraints $A \vec{l} \leq \vec{b}$, we compute:
(1)~a LARE{} query $\varphi_E$ corresponding to the ECRPQ part, 
(2)~a labellings $\lambda_{i,a}$ for every path $\pi_i$ from $\varphi_E$ and every edge  label $a$ from $\Sigma$, 
(3)~the arithmetical constraint $\varphi_A$ over $\lambda_{i,a}$ corresponding to $A \vec{l} \leq \vec{b}$. 
The conjunction of $\varphi_E \wedge \varphi_L \wedge \varphi_A$ is equivalent to 
 the given  ECRPQ query with linear constraints.

\noindent \textbf{(2):}
Linear constrains are unaffected by nodes' labels. The ECRPQ part of ECRPQ+LC does not refer to nodes labels as well. 
Therefore, ECRPQ+LC cannot express a \pra{} query saying ``the sum of integer labels of nodes along path $p$ is positive''. 
\end{proof}

 \section{The query answering problem}\label{sec:summing}
The query-answering problem asks, given an \opra{} $Q(\vec{x},\vec{\pi})$, a graph $G$, nodes $\vec{v}$ and paths $\vec{p}$ of $G$, whether $Q(\vec{v},\vec{p})$ holds in $G$.
 We are interested in the \emph{data complexity} of the problem, where the size of a query 
is treated as constant, and \emph{combined complexity}, where there is no such restriction.

To obtain the desired complexity results, we assume that the absolute values of the graph labels are polynomially bounded in the size of a graph. 
This allows us to compute arithmetical relations on these labels in logarithmic space. 
Without such a restriction, the data complexity of the query-answering problem we study is \NP{}-hard by a straightforward reduction from the knapsack problem. 

We state the complexity bounds as follows.

\begin{theorem}
The query answering problem 
for \opra{} queries 
with bounded number of auxiliary labellings is \PSpace{}-complete and its data complexity is \NL{}-complete.
\label{th:query-in-NL}
\end{theorem}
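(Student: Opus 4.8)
The plan is to prove membership and hardness separately, with a single nondeterministic ``guess-the-witness'' procedure serving both the \NL{} (data) and \PSpace{} (combined) upper bounds, and with direct reductions for the lower bounds. For the core \pra{} query $Q'$ evaluated over a graph, I would first compile each regular constraint $R(\pi_{i_1}, \ldots, \pi_{i_k})$ into an NFA over the (infinite) alphabet of node constraints, and consider the product of these NFAs synchronised with the transition structure of the graph. A configuration of this product consists of, for every path variable (free and existentially quantified), its current node, together with the state of each NFA and the running value of every arithmetical subterm $\Lambda_\ell$. The procedure guesses the tuple of paths one position at a time: at each step it reads the letter determined by the current and next nodes, advances the NFAs accordingly, and adds the relevant label values to the accumulators. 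It accepts when every NFA is in an accepting state, the endpoints match the path constraints $\piszczalka{x_s}{\pi}{x_t}$, and the linear inequalities hold on the final accumulators.

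The quantitative heart of the argument is a bound showing that if $Q'$ is satisfiable then it is satisfiable by paths of length polynomial in $|G|$. I would argue this by observing that a run longer than the number of product configurations repeats a configuration, so that cycles can be excised or pumped down while preserving acceptance; to make this respect the arithmetical constraints I would appeal to the semilinear description of the achievable weight vectors (as in \cite{KopczynskiT10,Barcelo+12}) together with feasibility of the linear system. Since graph labels are polynomially bounded by assumption, the accumulators then stay polynomially bounded, hence fit in $O(\log|G|)$ bits, as does a position counter enforcing the length bound. For data complexity the query is fixed, the number of paths and the NFA sizes are constants, the product has $|V|^{O(1)}$ configurations, and the whole working memory is logarithmic, giving \NL{}. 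For combined complexity the NFAs may be exponential in the query, but one stores only the \emph{vector} of NFA states, so a configuration has polynomial size and the same guess-and-check procedure runs in nondeterministic polynomial space, hence in \PSpace{} by Savitch's theorem.

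To pass from \pra{} to \opra{} I would handle the \q{LET} block by evaluating the auxiliary labellings $\lambda_1, \ldots, \lambda_n$ \emph{on demand} rather than storing them. By induction on the term structure I would show that computing $\eta^G(t(\vec v))$ stays within the space budget: constants and base labellings are immediate; a subterm $[Q(\vec y)]$ invokes the \pra{} procedure recursively; $\min_{\lambda,\pi}$ and $\max_{\lambda,\pi}$ run the same guessed-run search while tracking the running extremum of $\lambda[p]$; and an aggregate $f'(\{t(x):t'(x,\vec y)\})$ loops over all nodes $v$, evaluating the inner terms and folding them with $f'$. Because the number of auxiliary labellings is bounded, the depth of such recursive calls is bounded by a constant, and by recomputing values rather than storing them the total space remains logarithmic for data complexity and polynomial for combined complexity; the fundamental functions in $\F{}$ are logspace-computable on their binary inputs by assumption, so incorporating them is free.

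For the lower bounds, \NL{}-hardness of the data complexity is immediate from directed $st$-reachability, already expressed by $Q_{\mathrm{route}}(s,t,\pi)$ from Section~\ref{subsec:pra}. For \PSpace{}-hardness of the combined complexity I would reduce from the intersection-nonemptiness problem for automata (equivalently from \textsc{TQBF}): over a fixed gadget graph whose nodes carry alphabet symbols through their labels, a route spells a word, and a conjunction of regular constraints — one per input automaton — forces that word to belong to every language, so the query holds exactly when the intersection is nonempty. I expect the length-bound lemma of the second paragraph to be the main obstacle: unlike plain reachability, the arithmetical constraints couple the accumulated weights to the run, and because labels may be negative a careless cycle excision can break a constraint $\sum_\ell c_\ell \Lambda_\ell \le c_0$; reconciling the pumping argument with the linear system is what actually guarantees that the counters remain small enough to fit the \NL{} and \PSpace{} budgets.
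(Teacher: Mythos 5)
Your overall architecture matches the paper's: a product of the graph with NFAs for the regular constraints (the paper packages this as an ``answer graph'' represented on the fly by a logspace machine), a polynomial length bound for witnessing paths derived from semilinear descriptions of reachable weight vectors (the paper gets this from the $\Z$-reachability problem for fixed-dimension unary VASS and its path-scheme analysis, rather than from \cite{KopczynskiT10}), on-demand evaluation of auxiliary labellings with constant recursion depth, and the standard reachability / automata-intersection reductions for the lower bounds. You also correctly identify the reconciliation of cycle excision with the linear inequalities over possibly negative labels as the crux of the length-bound lemma.

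There is, however, one concrete gap: your treatment of the terms $\min_{\lambda, \pi} Q(\vec{y}, \pi)$ and $\max_{\lambda, \pi} Q(\vec{y}, \pi)$ as ``the same guessed-run search while tracking the running extremum of $\lambda[p]$'' does not work. A nondeterministic branch that guesses one path can only witness that some value is \emph{achievable}; it cannot output the extremum, because (i) it has no way to certify that no other branch achieves a better value, and (ii) the extremum may be $-\infty$ (resp.\ $\infty$) when a negative-weight (resp.\ positive-weight) cycle can be pumped on paths satisfying $Q$, which no finite guessed run exhibits. The paper resolves this by extracting from the path-scheme decomposition $\alpha_0\beta_1^*\cdots\beta_k^*\alpha_k$ and the associated Diophantine systems two explicit length thresholds $b_1(G,Q) < b_2(G,Q)$: if the minimum is finite it is realized by a path of length at most $b_1$, and the existence of a path of length between $b_1$ and $b_2$ beating every shorter path certifies that the minimum is $-\infty$; both conditions are then checked using closure of \NL{} and \PSpace{} under complement and bounded alternation. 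Without some such argument the min/max leaves of auxiliary-labelling terms cannot be evaluated, and your induction on the term structure (and hence the \opra{} upper bound) does not go through.
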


The emptiness problem (whether there exist nodes $\vec{v}$ and paths $\vec{p}$ such that a given \opra{} query $Q$ holds for $\vec{v},\vec{p}$ in a given graph $G$) has the same complexity; this follows from the fact that a query $Q(\vec{x}, \vec{\pi})$ is non-empty in $G$ iff $Q(\epsilon, \epsilon)$ (same query without free variables) holds in $G$. 

The lower bounds in Theorem~\ref{th:query-in-NL} follow from the \PSpace{}-hardness of ECRPQ~\cite{Barcelo+12}, as discussed in Section~\ref{sec:ep},
and for the \NL{}-hardness of the reachability problem.

Recall that an \opra{} query is of the form
\q{LET $O$ IN $Q'$}, where $Q'$ is a \pra{} query, $O$ is of the form $\definedBy{\lambda_1}{t_1}, \ldots, \definedBy{\lambda_n}{t_n}$ and $t_1, \ldots, t_n$ are terms. Also, by $|O|$ we denote the number of labellings defined in $O$.
 The upper bound in Theorem 3 follows directly from the following lemma.

\begin{lemma}
\label{l:induction}
For every fixed $s \geq  0$, we have:
\begin{enumerate}[(1)]
\item Given a graph $G$ and an \opra{} query $Q$  := \q{LET $O$ IN $Q'$} such that $|O| \leq s$, we can decide whether $Q$ holds in $G$ in non-deterministic polynomial space in $Q$ and non-deterministic logarithmic space in $G$.
\item Given a graph $G$ and an \opra{} query $Q$  := \q{LET $O$ IN $Q'$} such that $|O| \leq s$, we can compute $\min_{\lambda, \pi} Q(\vec{y}, \pi)$ (resp., $\max_{\lambda, \pi} Q(\vec{y}, \pi)$) non-deterministically in  
 polynomial space in $Q$ and logarithmic space in $G$. The computed value is either polynomial in $G$ and exponential in $Q$, or $-\infty$ (resp., $\infty$).
\end{enumerate}
\end{lemma}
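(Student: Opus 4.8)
The plan is to prove Lemma~\ref{l:induction} by induction on the bound $s$ on the number of auxiliary labellings, establishing (1) and (2) simultaneously. For the base case $s=0$ the query is a plain \pra{} query $Q'$ over $G$. First I would guess the values of all node variables (this costs $(\#\text{node vars})\cdot\log|G|$ space) so that every path constraint $\piszczalka{x_s}{\pi}{x_t}$ becomes a fixed source/target requirement and shared node variables impose the right equalities. Next, for each regular constraint I would build a nondeterministic automaton over the infinite alphabet of $k$-node-constraint vectors and take the synchronous product of all these automata with one copy of $G$ per path variable. A \emph{configuration} of this product records the current node of each path (each in $O(\log|G|)$ bits) together with the current state of every constraint automaton (polynomially many bits in $|Q'|$ in total), so a configuration never needs more than $\mathrm{poly}(|Q'|)\cdot\log|G|$ space even though the product itself is exponential in $Q'$. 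A tuple of paths is then guessed step by step: at each step we pick a node-constraint vector, check it against the current nodes (which only requires the polynomially bounded labels of those nodes, computable in $\NL{}$) and advance all automata.

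The arithmetical constraints are the crux. I would equip each product transition with a weight vector recording its contribution to the left-hand side of every inequality $c_1\Lambda_1+\dots+c_j\Lambda_j\le c_0$, turning the task into: does an accepting path-tuple exist whose accumulated weight vector lies in the fixed polyhedron cut out by the inequalities? For a single inequality this is min-weight reachability: the minimum is $-\infty$ exactly when a negative-weight cycle lies on some valid source-to-target path (detectable in $\NL{}$), and is otherwise attained on a path no longer than the product, hence polynomial in $G$ and exponential in $Q$, so one guesses such a path while maintaining a running sum that fits in $\mathrm{poly}(|Q|)+\log|G|$ bits. For several inequalities at once I would follow the technique already cited for ECRPQ+LC: the set of achievable weight vectors of accepting path-tuples is semilinear and, by Kopczy\'nski--To~\cite{KopczynskiT10}, admits a description polynomial in the product for a fixed number of dimensions; guessing a semilinear representative (a base path-tuple together with loop multiplicities) and verifying the resulting running sums keeps the test in $\NL{}$ for data and $\PSpace{}$ for combined complexity. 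Since deciding a \pra{} query is thus in $\NL{}$ and $\NL{}=\coNL{}$ (likewise $\PSpace{}=\coNP\!\!\Space$ for combined), both a query and its negation can be decided within budget, which is needed because terms may use a nested query with either polarity. This yields (1) for $s=0$; for (2) the very same weighted product computes $\min_{\lambda,\pi}Q$ / $\max_{\lambda,\pi}Q$ as an extremal-weight reachability value, which is $-\infty$ (resp.\ $\infty$) precisely when the appropriate cycle is reachable on a satisfying path and is otherwise attained on a path shorter than the product, giving the stated polynomial-in-$G$, exponential-in-$Q$ bound.

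For the inductive step consider $Q=$ \q{LET $O$ IN $Q'$} with $|O|=n\le s$. I would run the base-case algorithm for $Q'$ but treat it as operating over $G[O]$: whenever it needs a value $\lambda_i(\vec v)$ of an auxiliary labelling, I recompute it on demand by evaluating the defining term $t_i(\vec v)$ over $G[\lambda_1{:=}t_1,\dots,\lambda_{i-1}{:=}t_{i-1}]$. Evaluation descends through the term grammar: constants, labels and equalities are immediate; a fundamental function $f\in\F$ is applied to the already-computed values of its arguments in logarithmic space by the assumption on $\F$; an aggregate $f'(\{t(x):t'(x,\vec y)\})$ is computed by scanning all nodes $v$ with a $\log|G|$ counter, evaluating $t'(v,\vec y)$ recursively and, when it holds, providing $t(v)$ on demand as the next element of the argument multiset, which is legitimate because the functions in $\F$ are permutation-invariant and log-space computable; and, crucially, the constructs $[Q(\vec y)]$, $\min_{\lambda,\pi}Q$ and $\max_{\lambda,\pi}Q$ are handled by viewing the inner query as the \opra{} query \q{LET $\lambda_1{:=}t_1,\dots,\lambda_{i-1}{:=}t_{i-1}$ IN $Q$}, which has at most $i-1\le s-1$ auxiliary labellings, and invoking the induction hypothesis (part (1) for $[Q]$, part (2) for the extrema).

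The final ingredient is the space accounting for this on-demand composition. Every intermediate value produced --- labels, auxiliary-labelling values, extrema --- is either $\pm\infty$ or bounded polynomially in $G$ and exponentially in $Q$, hence storable in $\mathrm{poly}(|Q|)+\log|G|$ bits, so values can be passed between levels without recomputing their inputs; as the nesting depth of terms and the labelling count are bounded by $|Q|$ (a constant in the data setting), the total space is a sum over a bounded recursion of such registers and stays within nondeterministic $\mathrm{poly}(|Q|)$ space and nondeterministic $\log|G|$ space, giving $\PSpace{}$ combined and $\NL{}$ data. I expect the main obstacle to be the simultaneous arithmetical constraints in the base case: proving that membership of the achievable weight set in the fixed polyhedron, and the extremal values with correct $\pm\infty$ detection, can be decided in $\NL{}$ data complexity, which genuinely requires the semilinear-set/Parikh-image small-witness argument rather than a naive running-sum guess, since a running sum alone does not bound the path length once several coordinates must be balanced against one another. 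A secondary and more routine obstacle is checking that the $\NL{}=\coNL{}$ closure together with the poly-boundedness of all intermediate values makes the logspace composition go through uniformly at every node of the term.
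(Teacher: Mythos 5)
Your proposal follows essentially the same route as the paper: a product/answer-graph construction for the path and regular constraints representable on the fly in logarithmic space, a semilinear small-witness argument for the simultaneous arithmetical constraints and for the $\pm\infty$ detection of extrema, and an inductive on-demand evaluation of auxiliary labellings through the term grammar that invokes the lemma itself for nested queries. The only substantive difference is which off-the-shelf result underlies the small-witness step --- you cite Kopczy\'nski--To Parikh images, while the paper reduces to $\Z$-reachability for fixed-dimension unary VASS and uses the path-scheme/Diophantine analysis of Blondin et al. --- but these are interchangeable for the purpose and you correctly identify that a naive running-sum guess would not suffice.
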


We first prove the upper bounds for \pra{} (i.e., for $s=0$), and then extend the results to \opra{}.

\subsection{Language \pra{}} 
Assume a \pra{} query \q{$Q=$ MATCH NODES $\vec{x}$, PATHS $\vec{\pi}$ SUCH THAT $P$ WHERE $R$ HAVING $A$}.
We prove the results in two steps.  First, we construct a Turing machine of a special kind (later on called \qam{}) that represents graphs, called \emph{answer graphs}, with distinguished initial and final nodes, such that every path from an initial node to a final node in this graph is an encoding of paths that satisfy constraints $P$ and $R$ of $Q$ in graph $G$ (for some instantiation of variables $\vec{x}$). These graphs are augmented with the computed values of expressions that appear in arithmetical constraints $A$. Then, we prove that checking whether in an answer graph there is a path from an initial node to a final node that encodes a path in  $G$ satisfying $A$ can be done within desired complexity bounds.
Deriving values for $\vec{x}$ from computed paths is straightforward.

The first step is an adaptation of the technique commonly used in the field, e.g., in \cite{Barcelo+12,Grabon+16}. We encode vectors $(p_1,\dots, p_n)$ of paths of nodes from some $V$ as a single path $p_1 \otimes \ldots \otimes p_n$
over the product alphabet $V^n$ (shorter paths are padded with $\padding$).

\Paragraph{Answer graphs}
Consider a graph $G$ with nodes $V$, its paths $\vec{p}$ and an \opra{} query
\q{$Q=$ MATCH NODES $\vec{x}$, PATHS $\vec{\pi}$ SUCH THAT $P$ WHERE $R$ HAVING $\bigwedge_{i=1}^m A_i \leq c_i$} with $|\vec{p}| = |\vec{\pi}|$ and existentially quantified path variables $\vec{\pi}'$.
Let $k={|\vec{\pi}|+|\vec{\pi}'|}$.  
The \emph{answer graph} for $Q$ on $G$, $\vec{p}$ is a triple $(G', S, T)$, where $S, T\subseteq M_Q \times V^k$;

\begin{itemize}
\item $G'$ is a graph with nodes $M_Q \times V^k$, where $M_Q$ is a finite set computed from $Q$;
\item for each $i\leq m$ and a node $(s,v_1, \dots, v_k)\in M_Q \times V^k$, the labelling $\lambda_i((s,v_1, \dots, v_k))$ is defined as the value of the arithmetic constraint $A_i$ over single-node paths $v_1$, \dots, $v_{k}$;
and 
\item{}
 a path $q=q_1 \otimes \dots \otimes q_k$ from (a node in) $S$ to (a node in) $T$ is 
such that $\lambda_E(v, v')=1$ for all consecutive $v$, $v'$ of $q$ iff
 the paths $q_1, \dots, q_k$ satisfy $P \land R$ and $(q_1, \dots , q_{|\pi|}) = \vec{p}$.
\end{itemize}
Intuitively, the labelling $\lambda_E$ can be defined in such a way that along paths of $G'$  
the $V^k$-components of the nodes correctly encode paths of $Q$ satisfying the path constraints $P$ and
the $M_Q$ components store valid runs of automata corresponding to the regular constraints $R$. 

\Paragraph{\qam{}s} Answer graphs can be represented (on-the-fly) in logarithmic space.
A Query Applying Machine (\qam{}) is a non-deterministic Turing Machine which works in logarithmic space and only accepts inputs encoding tuples of the form $(G, t, w)$, where 
$G$ is a graph and $t$ is a symbol among ${ V, \lambda, S, T }$. 

For a graph $G$ and $k\geq 0$, a \qam{} $\M$ \emph{gives} a graph $G^\M_k = (V, \lambda_1, \dots, \lambda_k)$ and sets of nodes $S^\M_G$, $T^\M_G$ such that:
\begin{itemize}
\item $V$ consists of all the \emph{nodes} $v$ s.t. $\M$ accepts on $(G, V,  v)$, 
\item $\lambda_i$ is such that $\lambda_i(\vec{v}) = n$ iff $\M$ accepts on $(G, \lambda, (i, \vec{v}, n))$
\item $S_G^\M$ (resp., $T_G^\M$) consists of $v\in V$ such that $\M$ accepts on $(G, S, v)$ (resp., $(G, T, v)$).
\end{itemize}
For soundness, we require that for each $G$, $i$ and $\vec{v}$ there is exactly one $n$ such that $\M$ accepts on $(G, \lambda, (i, \vec{v}, n))$.

\begin{lemma}\label{l:pratoqam}
For a given query $Q$ and paths $\vec{p}$, we can construct in 
polynomial time a \qam{} $\M^Q$ such that for every graph $G$, machine $\M^Q$ gives an answer graph for $Q$ on $G, \vec{p}$.
\end{lemma}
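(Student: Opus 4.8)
The plan is to realize the answer-graph definition by a single non-deterministic log-space machine $\M^Q$ that decides each of the four query types $(G, V, \cdot)$, $(G, \lambda, \cdot)$, $(G, S, \cdot)$ and $(G, T, \cdot)$ on the fly. All the query-dependent data---the state set $M_Q$, the transition structure, and the evaluation of arithmetical constraints---will be precomputed from $Q$ and $\vec p$ alone, hence in time polynomial in $|Q| + |\vec p|$, while the dependence on $G$ enters only through log-space evaluations of $G$'s labellings.

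First I would fix the finite control. For each regular constraint $R_j$ in the WHERE clause I convert the regular expression $R_j$ (over the infinite alphabet of $k$-node constraints) into an NFA $\aut_j$ whose transitions are labelled \emph{symbolically} by node constraints; since node constraints are predicates on $V^{2k}$, the alphabet stays infinite but each transition carries a finite syntactic label that can be evaluated against a concrete pair of node tuples. I then let $M_Q$ be the product of the state sets of the $\aut_j$, augmented by a linear factor $\set{0, 1, \dots, |\vec p|}$ that tracks the position along the fixed paths $\vec p$; a node of the answer graph $G'$ is thus a pair $(s, \vec v) \in M_Q \times V^k$ recording the joint automaton state $s$ together with the current node of each of the $k$ paths.

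Next I would implement the four acceptance conditions. For a node query $(G, V, (s, \vec v))$ the machine merely checks $s \in M_Q$ and $\vec v \in (V \cup \set{\padding})^k$, a syntactic test in log space. For a label query the index $i$ ranges over the arithmetical labellings $\lambda_1, \dots, \lambda_m$ and the edge labelling $\lambda_E$ of $G'$: for an arithmetical $\lambda_i$ the value is the left-hand side $A_i$ evaluated on the single-node paths $v_1, \dots, v_k$, i.e.\ a fixed linear combination of labellings of $G$ read at the nodes $\vec v$, which---because label values are polynomially bounded---fits in log space; for $\lambda_E$ the machine accepts with value $1$ on $((s, \vec v), (s', \vec v'))$ exactly when every factor $\aut_j$ admits a transition from its $s_j$-component to its $s'_j$-component under a node constraint satisfied by the current/next tuple $(\vec v, \vec v')$, the position component advances correctly, and the fixed components agree with $\vec p$ at the relevant position; this is again a finite conjunction of labelling comparisons in $G$, each log-space. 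Finally, $(G, S, (s, \vec v))$ accepts when $s$ is initial in every factor (and the position component is $0$), the components sharing a common source variable in the path constraints $P$ coincide, and the fixed components equal the first nodes of $\vec p$; dually, $(G, T, (s, \vec v))$ accepts when $s$ is final, components sharing a common target variable coincide, and each path has reached its end (its component is $\padding$, or, for fixed paths, the last node of the corresponding $\vec p$ component).

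The correctness claim is that, under these definitions, a $G'$-path from $S$ to $T$ along edges with $\lambda_E = 1$ is precisely the $\otimes$-encoding of a tuple $q_1, \dots, q_k$ satisfying $P \land R$ with $(q_1, \dots, q_{|\pi|}) = \vec p$; this follows by matching the run of the product automaton to the encoding word $w_1 \dots w_s$ from the definition of $L^G(R)$, and is the routine direction. I expect the main obstacle to be the bookkeeping that makes the encoding faithful despite paths of differing lengths: once a component reaches $\padding$ it must stay $\padding$, the node constraints must be evaluated with the correct current/next substitution (including the $\z_i, \z_i'$ conventions), and the product automaton must read exactly $\max_j |q_j|$ letters---so the $S$, $T$ and $\lambda_E$ conditions have to agree on exactly when each path is allowed to terminate. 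Verifying that this padding discipline, together with the position factor enforcing $(q_1, \dots, q_{|\pi|}) = \vec p$, reproduces the semantics of $L^G(R)$ exactly is where the care lies; the complexity bound is then immediate, since $M_Q$ and the position factor have size independent of (resp.\ polynomial in the input containing) $G$, and every test the machine performs reduces to reading a constant number of $G$'s polynomially-bounded labels.
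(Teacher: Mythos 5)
Your construction is essentially the paper's own proof: the same product $M_Q \times V^k$ with a position counter tracking the fixed input paths $\vec p$, the same symbolic NFA transitions evaluated in log space against $G$'s labellings, the same per-node evaluation of the arithmetical constraints, and the same padding discipline (the paper realizes it via $\bot$-labelled self-loops on final states and checks the target of each path constraint at the edge where a component first becomes $\padding$, rather than at $T$, but this is exactly the bookkeeping you flag). Correct and the same approach.
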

\begin{proof}
\newcommand{\idxInp}{k_1}
\newcommand{\idxEx}{k_2}
Consider a graph $G$ with nodes $V$, its paths $\vec{p}$ and an \opra{} query
\q{$Q=$ MATCH NODES $\vec{x}$, PATHS $\vec{\pi}$ SUCH THAT $P_1, \ldots, P_l$ WHERE $R_1, \ldots, R_p$ HAVING $\bigwedge_{i=1}^m A_i \leq c_i$}
 with $|\vec{p}| = |\vec{\pi}|$ and existentially quantified path variables $\vec{\pi}'$.
Let $k={|\vec{\pi}|+|\vec{\pi}'|}$, i.e., we put $\idxInp = |\vec{\pi}|$ and $\idxEx = |\vec{\pi}'|$. 

First, let $N$ be the maximal length of paths in $\vec{p}$.
Second, for every regular constraint $R_i$ in $Q$, we build an NFA $\aut_i$ recognizing the language of the node constraints defined by $R_i$. 
We extend each $\aut_i$ with self-loops on all final states labelled by an auxiliary node condition $\bot$.
Intuitively, $\bot$ is true on paths that have terminated.
We define $M$ as a set of tuples $(s_1, \ldots, s_m, j)$ such that (1)~for $i \in \{1,\ldots, m\}$ we have
$s_i$ is a state of $\aut_i$, and (2)~$j \in \{1, \ldots, N, \infty\}$.
As defined in Section~\ref{sec:summing}, the set of nodes of $G'$ is $M \times V^k$ with additional constraint that a tuple 
$((s_1, \ldots, s_p, j), v_1, \ldots, v_k)$ is a node of $G'$ if for all $i \in \{1, \ldots, \idxInp\}$ we have $v_i = p_i[j]$, i.e., $v_i$ is the $j$-th node of the input path $p_i$ (or $\padding$ if $j > |p_i|$). 
The component $V^k$ represents paths in $Q$; since paths can have different length, we assume that the shorter paths are padded with the $\padding$ node that has special status. 
The auxiliary node condition $\bot$ holds if and only if all nodes are $\padding$, i.e., it allows us to combine regular conditions that work on paths of different length, i.e., 
if a regular condition represented by automaton $\aut_i$ holds on some  sequence of paths, we can append to all these paths $\padding$ nodes and $\aut_i$ still accepts.
Observe that given a graph $G$, we can recognize nodes of $G'$ in logarithmic space (note that $\vec{p}$ is fixed).

Next, we define the edge labelling $\lambda_E$ as follows:  Let $\vec{u}_1, \vec{u}_2$ be nodes of $G'$.
We have $\lambda_E(\vec{u}_1, \vec{u}_2) = 1$ if nodes $\vec{u}_1, \vec{u}_2$ are (1)~\emph{path consistent} and (2)~\emph{state consistent}, defined as follows. Otherwise, 
$\lambda_E(\vec{u}_1, \vec{u}_2) = 0$.
We now define path and state consistency. 
Let $\vec{u}_1 = (m, v_1, \ldots, v_k)$ with $m = (s_1, \ldots, s_p, j)$, and let
 $\vec{u}_2 = (m', v_1', \ldots, v_k')$ with $m' = (s_1', \ldots, s_p', j')$.
 Path consistency  ensures that (1)~the~first $\idxInp$ components of $V^k$ encode input paths $\vec{p}$, i.e.,
\begin{compactitem}
\item $j' = j+1$ if $j <N$ and $j' = \infty$ if $j \geq N$, and each $i \in \{1, \dots, \idxInp\}$ we have $v_i = p_i[j]$ or $v_i = \padding$ and $j > |p_i|$ (in particular if $j = \infty$).
\end{compactitem}
and (2a)~paths that end should satisfy path constraints and (2b)~paths that has terminated do not restart (all paths terminate with $\padding$), i.e.,
\begin{compactitem}
\item for each $i \in \{\idxInp+1, k\}$, we require two conditions (a)~if $v_i \neq \padding$ and $v_i' = \padding$ ($v_i$ is the last node of $\pi_i$), then for every 
path constraint $\piszczalka{x_s}{\pi_i}{x_t}$, we require $v_i = x_t$, and 
b)~if $v_i = \padding$, then $v_{i}' = \padding$ (path that terminated does not restart).
\end{compactitem}

State consistency ensures that the component $M$ stores valid runs of automata $\aut_1, \ldots, \aut_p$ corresponding to register constraints, i.e.,
\begin{compactitem}
\item for each $i \in \{1, \ldots, p\}$, the automaton $\aut_i$ has a transition $(s_i, a_i, s_i')$, where $a_i$ is a node constraint, and
this node constraint $a_i$ is satisfied over the nodes of $v_1, v_1', \ldots, v_k, v_k'$ (we assume that $a_i$ selects from the list of all paths only the relevant paths listed in the regular constraint). 
Recall that if $a_i = \bot$, then all nodes are required to be $\padding$. 
\end{compactitem}
It easy to check that given a graph $G$ and its two nodes $\vec{v}_1, \vec{v_2}$, we can decide in logarithmic space in $G$ whether 
$\lambda_E(\vec{v}_1, \vec{v}_2)$ is $0$ or $1$.

Then, we define labellings $\lambda_1, \ldots, \lambda_m$ for each arithmetical constraint, i.e.,
for $i \in \{1, \ldots, m\}$ we define $\lambda_i((m, v_1, \ldots, v_k))$ as $A_i$ computed on the subset of $v_1, \ldots, v_k$ selected by $A_i$.
Since $A_i$ is a linear combination and each labelling of $G$ is given in unary, all labellings of $G'$ can be computed in logarithmic space in $G$.

Finally, we define $S, T$ as follows. The set $S$ consists of nodes $(m, v_1, \ldots, v_k,1)$ such that 
\begin{compactenum}[(1)]
\item $m = (s_1, \ldots, s_m)$ and for $i \in \{1, \ldots, m\}$, we have $s_i$ is an initial state of $\aut_i$, 
\item for every path constraint $\piszczalka{x_s}{\pi}{x_t}$, we require $v_i = x_s$, and
\item for every $i \in \{1, \ldots, \idxInp\}$ we have $v_i = p_i[1]$.
\end{compactenum}
 The set $T$ consists of nodes $(m, v_1, \ldots, v_k)$ such that 
\begin{compactenum}[(1)]
\item $m = (s_1, \ldots, s_m,\infty)$ and for $i \in \{1, \ldots, m\}$, we have $s_i$ is a final state of $\aut_i$, 
\item for every $i \in \{1, \ldots, k\}$ we have $v_i = \padding$, i.e., all paths have terminated.
\end{compactenum}
\end{proof}

The second step amounts to the following lemma. 

\begin{lemma}\label{l:solvingqam}
For a graph $G$ and a \qam{} $\M^Q$, let $\Pi$ be the set of paths from $S_G^{\M^Q}$ to $T_G^{\M^Q}$ satisfying $\bigwedge_{i=1}^m \lambda_i[\pi] \leq c_i$ in $G_m^{\M^Q}$.
(1)~Checking emptiness of $\Pi$ can be done non-deterministically in polynomial space in $Q$ and logarithmic space in $G$.
(2)~Computing the minimum (resp., maximum) of the value $\lambda_j[\pi]$ over all paths in $\Pi$ can be done non-deterministically in polynomial space in $Q$ and logarithmic space in $G$.
The computed extremal value is either polynomial in $G$ and exponential in $Q$, or $-\infty$ (resp., $\infty$).
\end{lemma}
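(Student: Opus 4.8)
The plan is to view $\Pi$ as the set of $\lambda_E$-connected walks from $S:=S_G^{\M^Q}$ to $T:=T_G^{\M^Q}$ in the answer graph $G':=G_m^{\M^Q}$, and to reduce both parts of the lemma to a single feasibility primitive together with a test for unboundedness. By Lemma~\ref{l:pratoqam} the \qam{} $\M^Q$ lets us, given a node of $G'$ and a candidate successor, decide $\lambda_E=1$ and read off each value $\lambda_i$ in logarithmic space in $G$ (and polynomial space in $Q$); moreover every $\lambda_i$ is bounded in absolute value by a polynomial $p(|G|)$, since it is a fixed linear combination of the unary-encoded labels of $G$. The primitive I isolate, call it \textsc{Feas}, is: given finitely many constraints $\lambda_{i_1}[\pi]\le d_1,\dots,\lambda_{i_r}[\pi]\le d_r$, decide whether some $S$--$T$ walk satisfies all of them. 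Part~(1) is exactly \textsc{Feas} with the constraints $\lambda_i[\pi]\le c_i$, and I will reduce part~(2) to a bounded number of \textsc{Feas} calls plus one unboundedness test.

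The key combinatorial step, and the main obstacle, is a \emph{bounded-witness lemma}: if a \textsc{Feas} instance is feasible, then it has a witness walk whose length $L$ is polynomial in $|G|$ and exponential in $|Q|$; consequently every prefix sum of every $\lambda_i$ along this witness has absolute value at most $B:=L\cdot p(|G|)$, again polynomial in $|G|$ and exponential in $|Q|$. I would prove this by the flow formulation standard in the area: a walk is encoded by an edge-multiplicity vector $f\ge 0$ obeying the Eulerian flow-conservation equations (unit excess at $T$, unit deficit at $S$), and each constraint becomes a linear inequality $\sum_e f(e)\,w_i(e)\le d$. Feasibility of this integer program, if it holds, is witnessed by a solution whose entries are bounded by a polynomial in the number of edges and the largest coefficient raised to a power depending only on the number of constraints (Papadimitriou-style magnitude bounds); for fixed $Q$ the number of constraints and the dimension are constant, so the bound is polynomial in $|G|$, while in the combined setting it is exponential in $|Q|$. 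The delicate point --- exactly as in \cite{Barcelo+12,Grabon+16} --- is that the bounded integer solution must be realizable as a single \emph{connected} $S$--$T$ walk and not merely as a disjoint union of cycles; I would handle this by the usual simple-path-plus-attached-cycles decomposition, discarding cycles with all-nonnegative contributions (which only help the $\le$ constraints) and bounding the multiplicities of the remaining, necessarily connected, cycles.

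With the length bound $L$ in hand, \textsc{Feas} is solved by a nondeterministic search that never stores a whole walk: keep only the current node of $G'$ (which needs $\text{poly}(|Q|)+O(\log|G|)$ bits because $|G'|=|M_Q|\cdot|V|^k$), the running sums $\lambda_{i_\ell}[\pi]$ (which stay in $[-B,B]$, hence fit in the same budget), and a step counter bounded by $L$. At each step guess a successor, use $\M^Q$ to verify the $\lambda_E$-edge and to update the sums, and accept upon reaching a node of $T$ with every running sum at most its bound. This runs in nondeterministic logarithmic space in $G$ and polynomial space in $Q$, giving part~(1) directly.

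For part~(2) I first test whether the minimum is $-\infty$: this happens exactly when some \textsc{Feas}-witnessed base walk passes through a node $u$ carrying a cycle $C$ with $\lambda_j[C]<0$ and $\lambda_i[C]\le 0$ for all $i$, because inserting $C$ an unbounded number of times preserves all upper-bound constraints while driving $\lambda_j[\pi]$ to $-\infty$; conversely any $-\infty$ family must reuse such a cycle. This certificate is of bounded length by the same magnitude bounds, so I can guess $u$, verify reachability of $u$ from $S$ and of $T$ from $u$ along a feasible base walk, and guess and check $C$, all within the same space bounds. If the minimum is finite it lies in $[-B,B]$, so I guess a candidate value $v$ and verify both that there \emph{exists} a feasible $S$--$T$ walk with $\lambda_j[\pi]\le v$ (a \textsc{Feas} call with the extra constraint) and that there exists \emph{no} feasible walk with $\lambda_j[\pi]\le v-1$; the latter is a complemented reachability, available since $\coNL=\NL$ in the data parameter and $\PSpace$ is closed under complement in the combined parameter. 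The maximum, and the $\infty$ case, are entirely symmetric (replace $\lambda_j$ by $-\lambda_j$). The difficulty throughout is concentrated in the bounded-witness lemma of the second paragraph; once the bounds $L,B$ are established, all remaining steps are routine guess-and-verify procedures over the \qam{}-presented answer graph.
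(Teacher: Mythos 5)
Your route for part~(1) matches the paper's in substance: both arguments hinge on a bounded-witness property for $S$--$T$ walks subject to the linear constraints, and then run a guess-and-verify search that stores only the current node of the answer graph, the running sums and a step counter. The paper obtains the witness bounds by reducing to $\Z$-reachability in a fixed-dimension unary VASS and invoking the result of Blondin et al.\ (path schemes $\alpha_0\beta_1^*\cdots\beta_k^*\alpha_k$ plus linear Diophantine systems); you re-derive them from the Eulerian flow formulation and integer-programming magnitude bounds. That is a legitimate alternative, though your connectivity repair --- ``discarding cycles with all-nonnegative contributions'' --- is not safe as stated: removing such a cycle can disconnect other cycles that were attached to the base path only through it, so the decomposition needs the usual extra care (or you can simply work with path schemes as the paper does).

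The genuine gap is in your $-\infty$ test for part~(2). You characterize unboundedness by the existence of a \emph{single} node $u$ on a feasible base walk carrying a \emph{single} cycle $C$ with $\lambda_j[C]<0$ and $\lambda_i[C]\le 0$ for all $i$. The ``conversely'' direction of that characterization fails. Take a base walk $S \to u_1 \to u_2 \to T$ with a cycle $C_1$ at $u_1$ having $\lambda_1[C_1]=+1$, $\lambda_j[C_1]=-10$, and a cycle $C_2$ at $u_2$ having $\lambda_1[C_2]=-1$, $\lambda_j[C_2]=0$, where $u_2$ cannot reach $u_1$. Pumping $n$ copies of each keeps $\lambda_1$ unchanged and drives $\lambda_j$ to $-\infty$, so the minimum is $-\infty$; yet no single closed walk through a single node satisfies $\lambda_i\le 0$ for all $i$ together with $\lambda_j<0$, and $C_1,C_2$ cannot be merged into one cycle. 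The correct certificate is a non-negative integer combination of several cycles attached at possibly distinct nodes of the base walk whose \emph{total} effect is $\le 0$ in every constrained coordinate and $<0$ in coordinate $j$ --- exactly what the paper extracts from the cone part $cone(P_{\rho})$ of the solution set of the Diophantine system attached to a path scheme, with the number of needed cycles bounded by the (fixed) dimension. As written, your test would wrongly report a finite minimum on such instances. The remainder of your part~(2) --- guessing a candidate $v\in[-B,B]$, verifying feasibility of $\lambda_j[\pi]\le v$ and refuting $\lambda_j[\pi]\le v-1$ using closure of \NL{} and \PSpace{} under complement --- is sound and parallels the paper's use of the two bounds $b_1(G,Q)<b_2(G,Q)$.
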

\begin{proof}
\newcommand{\norm}[1]{\|#1\|_1}
\newcommand{\cost}{\mathbf{c}}
A \emph{vector addition system with states}  (VASS) is an $\Z^d$-labelled graph $G$, i.e., $G = (V,E)$, where $V$ is a finite set and $E$ is a finite subset of $V \times \Z^d \times V$.
Depending on representation of labels $Z^d$, we distinguish unary and binary VASS. We define $\mu(s,\vec{v}, s')$ as $v$, the label of edge $(s,\vec{v}, s')$.
The \emph{$\Z$-reachability problem} for VASS, asks, given a VASS $G$ and its two \emph{configurations} $(s,\vec{u_1}), (t, \vec{u_2}) \in V \times Z^d$, whether there exists a path $\pi$ from 
$s$ to $t$ (of length $k$) such that  $\vec{u_1} + \sum_{i=0}^{k} \mu(\pi[k]) = \vec{u_2}$, i.e., $\vec{u_1}$ plus the sum of labels along $\pi$ equals $\vec{u_2}$. 

We discuss how to reduce the problem of existence of a path in $\Pi$ to {$\Z$-reachability problem} for VASS. We transform $G_m^{\M^Q}$ into a VASS $G' = (V,E)$ over the set of nodes of $G_m^{\M^Q}$
with two additional nodes $s,t$.  
We put an edge between two nodes connected node $q_1, q_2 \in V$ labelled by the label of the source node $\vec{v}$, i.e., for all $q_1, q_2 \in V$ we have $(q_1, \vec{v},q_2)$ if $\lambda_E(q_1, q_2) = 1$ and 
$\vec{v} = (\lambda_1(q),\ldots, \lambda_m(q))$. Moreover, 
we define $s$ as the source and $t$ and the sink, i.e., 
(1)~for every $q \in s_G^{\M^Q}$ we put an edge $(s,\vec{v}, q)$, where $\vec{v} = (c_1,\ldots, c_m)$ (constants from the definition of $\Pi$),
and (2)~for every $q \in T_G^{\M^Q}$ we put an edge $(q,\vec{v}, t)$, where 
$\vec{v} = (\lambda_1(q),\ldots, \lambda_m(q))$.
Finally, we allow the labels to be increased in $t$, i.e., 
for every $i \in \{1, \ldots, m\}$, we put $(t, \mathbbm{1}_i, t)$, where $\mathbbm{1}_i \in Z^d$ has $1$ at the component $i$, and $0$ at all other components.

Observe that paths from $\Pi$ correspond to paths in VASS $G'$ from $(s,\vec{0})$ to $(t,\vec{0})$. The $\Z$-reachability problem for unary VASS of the fixed dimension (which is $m$ in the reduction) is in \NL~\cite[Therem 19]{Blondin+15}. In the proof, it has been shown that if there exists a path between given two configurations, then there also exists a path, which is 
(a)~polynomially bounded in the VASS, provided that weights are given  in unary and the dimension is fixed, and
(b)~exponentially bounded in the VASS without restrictions. 

Now, we are ready to show (1). 
If $Q$ is fixed, the dimension of $G'$, number $m$ is fixed as well, and all labelling values are given in the unary. 
Therefore, if $\Pi$ is non-empty, then it contains a path of polynomial size in $|G|$. Existence of such a path can be verified in non-deterministic 
logarithmic space using  the \qam{} $\M^Q$  as an oracle to query for nodes, edges and labelling of $G_m^{\M^Q}$.
The \qam{} $\M^Q$ requires logarithmic space in $|G|$.
Therefore checking emptiness of $\Pi$ can be done non-deterministically in logarithmic space in $|G|$.

If $\Pi$ is non-empty, then it contains a path of exponential size in $|Q|$. Existence of such a path can be verified in non-deterministic 
polynomial space using  the \qam{} $\M^Q$  as an oracle to query for nodes, edges and labelling of $G_m^{\M^Q}$.
Therefore checking emptiness of $\Pi$ can be done non-deterministically in polynomial space in $Q$.

To show (2), we need to analyze the proof of~\cite[Theorem 19]{Blondin+15}. It has been shown that there exist a finite set $S$ of path schemes of the form 
$\alpha_0 \beta_1^* \ldots \beta_k^* \alpha_k$ such that (1)~each path scheme in $S$ has 
polynomially bounded length in the size of VASS,
for all configurations $(s,\vec{v}_1)$, $(t,\vec{v}_2)$, if there is a path from $(s,\vec{v}_1)$ to $(t,\vec{v}_2)$, then there is a path that matches some path scheme from $S$.
Next, it has been shown that for every  path scheme $\alpha_0 \beta_1^* \ldots \beta_k^* \alpha_k = \rho \in S$, there is a system of linear Diopahntine equations $\mathcal{E}_{\rho}$ such that  
$\mathcal{E}_{\rho}$ has a solution $x_1, \ldots, x_k$ if and only if
$\alpha_0 \beta_1^{x_1} \ldots \beta_k^{x_k} \alpha_k$ is a path from $(s,\vec{v}_1)$ to $(t,\vec{v}_2)$.
Next, for each system of linear Diophantine equations $\mathcal{E}_{\rho}$, its form implies that all its solutions are represented by $B_{\rho} + cone(P_{\rho})$, where 
$B_{\rho},P_{\rho}$ are sets of vectors whose coefficients are 
(a)~exponentially bounded in the dimension, 
(b)~polynomially bounded in the size of VASS (with the dimension fixed).
Finally,  $cone(P_{\rho})$ linear combinations of vectors from $P$ with non-negative integer coefficients.
It follows that if $\Pi$ is non-empty, one of the following holds:
\begin{enumerate}[(1)]
\item For some path scheme $\rho = \alpha_0 \beta_1^* \ldots \beta_k^* \alpha_k$, sets $B_{\rho},P_{\rho}$ are non-empty ( $\mathcal{E}_{\rho}$  has infinitely many solutions), and for some
$\vec{u} = (u_1, \ldots, u_k)$, the sum of the value $\lambda_j[\pi]$ over paths $\beta_1^{u_1}, \ldots, \beta_k^{u_k}$ is negative, and hence the minimum is $-\infty$,
\item Otherwise, 
the minimum exists and it is realized by some path $\pi$ matching some path scheme of the form $\rho = \alpha_0 \beta_1^* \ldots \beta_k^* \alpha_k$ such that 
$\pi = \alpha_0 \beta_1^{x_1} \ldots \beta_k^{x_k} \alpha_k$, where $(x_1, \ldots, x_k) \in B_{\rho}$. Observe that it does not pay off to incorporate vectors from $P_{\rho}$
as they cannot decrease the value of $\lambda_j[\pi]$. Finally, observe that the size of such a path $\pi$ is polynomial in the size of VASS if the dimension of the VASS is fixed and
it is exponential in the dimension.
\end{enumerate}

From (1) and (2), we derive bounds $b_1(G,Q) < b_2(G,Q)$, which are polynomial in $G$ and exponential in $Q$ such that
if  the minimum of $\lambda_j[\pi]$ over path in $\Pi$ exists, then it is realized by some path of length bounded by $b_1(G,Q)$.
However, if there is a path $\pi \in \Pi$ of length between $b_1(G,Q)$ and $b_2(G,Q)$, with $\lambda_j[\pi]$ lower than the value of any path shorter than 
$b_1(G,Q)$,  then the minimum is infinite. 
Since NL and PSPACE and closed under complement and bounded alternation (only two conditions to be checked), both conditions
can be checked in non-deterministically in polynomial space in $Q$ and logarithmic space in $G$.
 The case of the maximum is symmetric.
\end{proof}

\subsection{Language \opra{}}
Assume $O=\, \definedBy{\lambda_1}{t_1}$, \dots, $\definedBy{\lambda_s}{t_s}$. We show by induction on $s$ that the values of the labellings of a graph $G[O]$ can be non-deterministically computed in space polynomial in $O$.

\begin{lemma}\label{l:step}
Let $s$ be fixed. For a graph $G$ and $O=\, \definedBy{\lambda_1}{t_1}$, \dots, $\definedBy{\lambda_s}{t_s}$, 
the value of each labelling of $G[O]$ can  be non-deterministically computed in polynomial space in $O$ and
logarithmic space in $G$.
\end{lemma}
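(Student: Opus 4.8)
The plan is to proceed by induction on $s$, evaluating the defining term $t_s$ by structural recursion and invoking Lemma~\ref{l:induction} for ontologies of size strictly below $s$ to discharge the nested queries. The two lemmas are really established in tandem: Lemma~\ref{l:induction} for index $s$ relies on Lemma~\ref{l:step} for the same $s$ (it must read the labellings of $G[O]$ while running the \pra{} machinery of Lemmas~\ref{l:pratoqam} and~\ref{l:solvingqam}), whereas the inductive step of Lemma~\ref{l:step} for index $s$ only needs Lemma~\ref{l:induction} for ontologies of size $s-1$. The base case $s=0$ is immediate: $G[O]=G$, so every labelling is part of the input and is readable in logarithmic space.

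For the inductive step, fix $O=\definedBy{\lambda_1}{t_1},\ldots,\definedBy{\lambda_s}{t_s}$ and write $G'=G[\definedBy{\lambda_1}{t_1},\ldots,\definedBy{\lambda_{s-1}}{t_{s-1}}]$. By the induction hypothesis the labellings $\lambda_1,\ldots,\lambda_{s-1}$ of $G'$ are computable within the stated bounds, and the original labellings of $G$ are read directly; it remains to compute $\lambda_s(\vec v)=t_s(\vec v)$ for a given tuple $\vec v$. I would evaluate $t_s$ by recursion on its syntax tree, following the eight clauses defining $\eta^G(\cdot)$: a constant and an equality test $y=y'$ are immediate; a subterm $\lambda(\vec y)$ is resolved either by reading $G$ or by the induction hypothesis; a nested $[Q(\vec y)]$ is the \opra{} query \q{LET $\definedBy{\lambda_1}{t_1},\ldots,\definedBy{\lambda_{s-1}}{t_{s-1}}$ IN $Q$}, which has ontology of size $s-1$ and is decided by Lemma~\ref{l:induction}(1); the terms $\min_{\lambda,\pi}Q$ and $\max_{\lambda,\pi}Q$ are computed the same way through Lemma~\ref{l:induction}(2). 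A functional term $f(t_1,\ldots,t_k)$ is handled by recursively evaluating the arguments and then applying $f$, which by assumption runs in space logarithmic in the length of its binary inputs. The aggregate term $f'(\{t(x):t'(x,\vec y)\})$ is evaluated by streaming: iterate over the nodes $v$ of $G$ in a fixed order (a logarithmic-space counter in $G$), for each $v$ recursively test whether $t'(v,\vec y)=1$, and if so fold $t(v)$ into a running accumulator; permutation-invariance of $f'$ (guaranteed for the aggregates in $\F$) makes the order irrelevant, so no explicit list of arguments need be stored.

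The whole computation is nondeterministic, and I would justify that the composition stays within budget: each sub-call computes a well-defined value by a guess-and-verify procedure that accepts only on the correct output, and such procedures compose because $\NL$ and $\PSpace$ are closed under composition and under complement ($\NL=\coNL$). In particular the membership tests $t'(v,\vec y)=1$ inside the aggregate, and the guess-and-verify used inside Lemma~\ref{l:induction}(2), remain correct within the space bound. The recursion depth across all $s$ levels is bounded by $\sum_{i\le s}|t_i|\le|O|$, and each stack frame stores only a constant number of node names ($O(\log|G|)$ bits each) together with the intermediate numeric values, so the total space is governed by how large those numeric values become.

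The step I expect to be the main obstacle is precisely controlling the bit-length of the intermediate values, and this is where fixing $s$ is used. A binary function such as $\cdot$ adds the bit-lengths of its arguments, and a chain of products in a single term of size $\ell$ can raise a value $V$ to $V^{\ell}$; propagating this through the $s$ nested labellings multiplies the exponents, so $\lambda_s$ is bounded by roughly $\bigl(\mathrm{poly}(G)\cdot 2^{\mathrm{poly}(Q)}\bigr)^{\prod_{i\le s}|t_i|}$, with $\prod_{i\le s}|t_i|\le|O|^{s}$. Since $s$ is a constant, $|O|^{s}$ is polynomial in $O$, so every intermediate value has a binary representation of length $\mathrm{poly}(|O|)\cdot O(\log|G|)+\mathrm{poly}(|Q|)$ — logarithmic in $G$ once the query is fixed (giving the \NL{} data complexity), and within the overall polynomial-space budget in general. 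Were $s$ allowed to grow, $|O|^{s}$ could become exponential and this accounting would break down. Verifying this magnitude bound carefully across all eight term clauses, and checking that the $\pm\infty$ values returned by Lemma~\ref{l:induction}(2) propagate correctly through $f$ and $f'$, is the technical heart of the argument.
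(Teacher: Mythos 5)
Your proposal follows essentially the same route as the paper's proof: induction on $s$ (run in tandem with Lemma~\ref{l:induction}), then a structural recursion over the syntax tree of the defining term, discharging the leaves $c$, $y=y'$, $\lambda(\vec y)$, $[Q]$, $\min$, $\max$ exactly as the paper does (input/IH for labellings, Lemma~\ref{l:induction} for nested queries) and evaluating $f$ and the aggregates $f'$ by iterating over the graph's nodes with a logarithmic-space accumulator. Your accounting of the bit-length blow-up of intermediate values through nested products, and of why fixing $s$ keeps $\prod_i|t_i|$ polynomial, is if anything more explicit than the paper's space bound for the internal nodes, but it is the same argument.
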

\begin{proof}
The proof is by induction on $s$.
The basis of induction, $s = 0$, is trivial.
Assume that for $s$ the lemma statement and Lemma~\ref{l:induction} hold.  
We show that it holds for $s+1$.
Consider graph $G$,  $O=\, \definedBy{\lambda_1}{t_1}$, \dots, $\definedBy{\lambda_s}{t_s}$, and 
$O' =\, O$, $\definedBy{\lambda_{s+1}}{t_{s+1}}$.

We show that the value of $t_{s+1}$ can  be non-deterministically computed in polynomial space in $O$ and
logarithmic space in $G$. We start the computation form the bottom, the leaves, and show that 
the values of leaves can  be non-deterministically computed in polynomial space in $O$ and
logarithmic space in $G$. Indeed, leaves are of one of the following forms: 
$c \mid \lambda(\vec{y}) \mid [Q(\vec{y})] \mid \min_{\lambda, \pi} Q(\vec{y}, \pi) \mid \max_{\lambda, \pi} Q(\vec{y}, \pi) \mid y = y$.
The cases of $c$ and $y = y$ are trivial. 
For leaves of the form $\lambda(\vec{y})$, we use the inductive assumption of this lemma. 
Finally, for leaves of the form $[Q(\vec{y})]$ and $\min_{\lambda, \pi} Q(\vec{y}, \pi)$, it follows from Lemma~\ref{l:induction} applied inductively for $s$.

The internal nodes of $t_{s+1}$ are of the form 
$ f(t_1(\vec{y}), \dots, t_k(\vec{y}) ) \mid f'(\{ t(x) \colon t(x,\vec{y})\})$.
Having the values of subterms $t_1(\vec{y}), \dots, t_k(\vec{y})$, the value $f(t_1(\vec{y}), \dots, t_k(\vec{y}) )$ 
can be computed in logarithmic in length of $\vec{x}$ and values in $\vec{x}$, i.e., we require space $max(\log(|t_1(\vec{y})|), \ldots, \log(|t_k(\vec{y})|)) + \log(k) + C$, where $C$ is a constant.
Similarly, to compute  $f'(\{ t(x) \colon t(x,\vec{y})\})$, we require space $max_x(\log(|t_1(x,\vec{y})|) + \log(|G|) + C$, where $C$ is a constant. 
It follows that to compute the value of $t_s$, we require space $|t_s| (\log|G| + C) \cdot M$, where $C$ is the maximal constant taken over all $f \in \F$ (which is fixed), and
$M$ is the maximum over space requirements of the leaves, which is logarithmic in $G$ and polynomial in $t_s$.
\end{proof}

Finally, we are ready to prove Lemma~\ref{l:induction}.

\begin{proof}(of Lemma~\ref{l:induction})
The proof is by induction on $s$.
 Lemma \ref{l:pratoqam} and Lemma \ref{l:solvingqam}' imply the basis of induction.
Next, assume that this lemma holds for $s$. 
Consider a query \q{LET $O$ IN $Q'$}, with $|O| = s+1$ and a graph $G$.  We first build a \qam{} $\M^{Q'}$, as in Lemma~\ref{l:pratoqam}. $M^{Q'}$ may refer to labellings from $O$, not defined in $G$. 
We change it so that whenever it wants to access a value of one of labellings defined in $O$, it instead runs a procedure guaranteed by Lemma~\ref{l:step}. 
Lemma~\ref{l:step} holds because this lemma holds for $s$.
Finally, we use Lemma~\ref{l:solvingqam} to determine the result.
\end{proof}

 \section{Conclusions}
\sloppy
We defined a new query language for graph databases, \opra{} and demonstrated its expressive power in two ways.
We presented examples of natural properties and \opra{} queries expressing them in an organized, modular way. 
We showed that \opra{} strictly subsumes query languages ECRPQ+LC and LARE.
Despite additional expression power, the complexity of the query-answering problem for  \opra{} matches the complexity for ECRPQ+LC and LARE. 

\fussy \bibliographystyle{plain}
\bibliography{bib}

\end{document}